\newenvironment{block}[3]{%
  \vspace{2mm}\par%
  \refstepcounter{#2}%
  \noindent\textbf{%
    #1 \thesection.\arabic{#2}.
  }%
}{%
  \par%
}
\newcounter{theorem}[section]
\newenvironment{theorem}[1][]{%
  \begin{block}{\textsc{Theorem}}{theorem}{#1}%
  \slshape%
}{%
  \end{block}%
  \vspace{2mm}%
}
\newenvironment{lemma}[1][]{%
  \begin{block}{\textsc{Lemma}}{theorem}{#1}%
  \slshape%
}{%
  \end{block}%
  \vspace{2mm}%
}
\newenvironment{corollary}[1][]{%
  \begin{block}{\textsc{Corollary}}{theorem}{#1}%
  \slshape%
}{%
  \end{block}%
  \vspace{2mm}%
}
\newcounter{invariant}
\newenvironment{proof}[1][Proof]{%
  \par\noindent{\sl{#1}}.
}{%
  \eop
  \bigskip%
}
\newcommand{\eop}{\hspace*{\fill}\nolinebreak$\Box$\nolinebreak\par}
\newcommand{\old}[1]{{}}
\newcommand{\todo}[1]{{\noindent$\times\times\times\times\times\times\times\times\times\times\times\times\times\times\times\times\times\times\times\times\times\times\times\times\times\times\times\times\times\times\times\times\times\times\times\times\times\times\times\times\times\times\times\times\times\times\times\times\times\times\times\times$ $\times\times\times\times\times\times\times\times\times\times\times\times\times\times\times\times\times\times\times\times\times\times\times\times\times\times\times\times\times\times\times\times\times\times\times\times\times\times\times\times\times\times\times\times\times\times\times\times\times\times\times\times$ $\times\times\times\times\times\times\times\times\times\times\times\times\times\times\times\times\times\times\times\times\times\times\times\times\times\times\times\times\times\times\times\times\times\times\times\times\times\times\times\times\times\times\times\times\times\times\times\times\times\times\times\times$}}
\def\eps{\varepsilon}
\def\cx{{\rm x}}
\def\cy{{\rm y}}
\def\pq{\langle p,q \rangle}
\renewcommand\caption[1]{\small\refstepcounter{figure}%
\begin{center}\textbf{Fig.\ \thefigure .}\ #1\end{center}\normalsize}
\begin{document}

\title{The Rectilinear Steiner Forest Arborescence problem}

~
\vspace{15mm}
\begin{center}
\noindent\textbf{\Large\uppercase{The Rectilinear Steiner \\[4mm] Forest Arborescence problem}}
\end{center}

\vspace{15mm}
\noindent\textsc{Łukasz Mielewczyk}\\[1mm]
Institute of Informatics\\
University of Gda\'{n}sk, 80-308 Gda\'nsk, Poland\\
\small \texttt{lukasz.mielwczyk@ug.edu.pl}
\\[3mm]
\noindent\textsc{Leonidas Palios}\\[1mm]
Department of Computer Science and Engineering\\
University of Ioannina, GR-45110 Ioannina, Greece\\
\small \texttt{palios@cs.uoi.gr}\\[3mm]
\textsc{Pawe\l{}~\.Zyli\'nski}\\[1mm]
Institute of Informatics\\
University of Gda\'{n}sk, 80-308 Gda\'nsk, Poland\\
\small \texttt{pawel.zylinski@ug.edu.pl}

\begin{abstract}\small
\noindent{}Let $r$ be a point in the first quadrant $Q_1$ of the plane $\mathbb{R}^2$ and let $P \subset Q_1$ be a set of points such that for any $p \in P$, its $x$- and $y$-coordinate is at least as that of $r$. A rectilinear Steiner arborescence for $P$ with the root $r$ is a rectilinear Steiner tree $T$ for $P \cup \{r\}$ such that for each point $p \in P$, the length of the (unique) path in $T$ from $p$ to the root~$r$ equals $({\rm x}(p)-{\rm x}(r))+({\rm y}(p))-({\rm y}(r))$, where ${\rm x}(q)$ and ${\rm y}(q)$ denote the $x$- and $y$-coordinate, respectively, of point~$q \in P \cup \{r\}$.

Given two point sets $P$ and $R$ lying in the first quadrant $Q_1$ and such that \mbox{$(0,0) \in R$}, the Rectilinear Steiner Forest Arborescence (RSFA) problem is to find the minimum-length spanning forest $F$ such that each connected component $F$ is a rectilinear Steiner arborescence rooted at some root in $R$. The RSFA problem is a natural generalization of the Rectilinear Steiner Arborescence problem, where $R=\{(0,0)\}$, and thus it is NP-hard.
In this paper, we provide a simple exact exponential time algorithm
for the RSFA problem, design a polynomial time approximation scheme as well as a fixed-parameter algorithm.
\end{abstract}

\noindent{\small{\bf Keywords}:\\
\phantom{aaa}rectilinear Steiner arborescence\\ \phantom{aaa}dynamic programming\\ \phantom{aaa}exact algorithm\\
\phantom{aaa}PTAS\\ \phantom{aaa}FPT}

\medskip
\noindent{\small{\bf AMS subject classifications}: 68W25, 68Q25}

\newpage

\section{Introduction}
\normalsize
A {\sl rectilinear} graph $G=(V(G),E(G))$ is a 
plane graph with (weighted) edges corresponding to horizontal or vertical line segments that connect two vertices in the plane $\mathbb{R}^2$, intersecting only at their endpoints. The weight~$l(e)$ of an edge $e \in E(G)$ is equal to the length of the segment it corresponds to, the {\sl weight} of $G$, denoted by $l(G)$, is defined as the sum of all edge weights in $G$, and the union of all the line segments (edges) in $G$, regarded as a point set, is denoted by ${\rm Un}(G)$.\footnote{However, sometimes, we shall leave the notation ${\rm Un}(\cdot)$ out, when this does not lead to misunderstanding.} A {\sl rectilinear Steiner tree} for a set $P$ of points in the plane is a rectilinear acyclic graph such that each point in $P$ is an endpoint of some edge in the tree.

Let $r$ be a point in the first quadrant $Q_1$ of the plane $\mathbb{R}^2$ and let $P \subset Q_1$ be a set of points such that for any $p \in P$, its $x$- and $y$-coordinate is at least as that of $r$. A {\sl rectilinear Steiner arborescence} (RSA) for $P$ with the root $r$ is a rectilinear Steiner tree $T$ for $P \cup \{r\}$ such that for each point $p \in P$, the length of the (unique) path in $T$ from $p$ to the root~$r$ equals $({\rm x}(p)-{\rm x}(r))+({\rm y}(p))-({\rm y}(r))$, where ${\rm x}(q)$ and ${\rm y}(q)$ denote the $x$- and $y$-coordinate, respectively, of point~$q \in P \cup \{r\}$. A {\sl minimum rectilinear Steiner arborescence} (MRSA) for $P$ with the root $r$ is an RSA for $P$ with the root $r$ that has the minimum weight $l(\cdot)$ over all RSA's for $P$ (with the root $r$).

The problem of determining the minimum rectilinear Steiner arborescence --- having  applications in the field of performance-driven VLSI design~\cite{CKL-1999,CLZ-1993} --- was first studied by Nastansky et al.~\cite{NSS-1974} who proposed an integer programming formulation with exponential time complexity. In 1979, Laderia de Matos~\cite{LdM-1979} proposed an exponential time dynamic programming algorithm. In 1985, Trubin~\cite{T-1985} claimed that the problem is polynomially solvable, however, in 1992, Rao et al.~\cite{RSHS-1992} showed Trubin's algorithm to be incorrect; Rao et al.\ also provided a simple greedy 2-approximation for this problem, with $O(n \log n)$ running time, and their result was extended by Cordova and Lee~\cite{CL-1994} to the more general case where points can be located in all four quadrants, with the root located at the origin of $\mathbb{R}^2$. Another fast $2$-approximation, even in the presence of obstacles, was presented by Ramanth~\cite{R-2003}. Some other optimal exponential time algorithms can be found in~\cite{HKMS-1992, LC-1997}, and some heuristics in~\cite{AR-1996,CKL-1999,TS-1995}. More recently,~\cite{LR-2000,Z-2000} presented polynomial time approximation schemes for the problem, and the NP-hardness of the problem was shown by Shi and Su~\cite{SS-2006}. In addition, Fomin et al.~\cite{FKLPS-2020} proposed the first subexponential algorithm for the problem, inspired by the work of Klein and Marx~\cite{KM14}, who obtained a subexponential algorithm for Subset Traveling Salesman Problem on planar graphs.  The depth-restricted variant of the rectilinear Steiner arborescence problem was studied by Ma\ss{}berg in~\cite{M-2015}, and the angle-restricted Steiner arborescence problem was studied in~\cite{BSV-2011,VBS-2011}. Finally, a more general problem, the Generalized Minimum Manhattan Network problem, was defined by Chepoi et al.~\cite{ChNV-2008} and then studied in~\cite{DFKSVW-2013,MOY21}.

\paragraph{The rectilinear Steiner arborescence with the prespecified sub-RSA.}
From a practical point of view, in real life, when considering optimization problems, we sometimes have to deal with scenarios in which we want to temporarily add extra points without however modifying the solution that we already had. Therefore, it is natural to consider such a scenario for the RSA problem as well.

Suppose that we are given a (not necessarily minimum) rectilinear Steiner arborescence $T$ for a point set $P$ with the root $r$, so-called {\sl prespecified sub-RSA}, and let $P^\ast$ be a set of points in the bounding box ${\rm BB}(T)$ of $T$; w.l.o.g.\ we assume that none of the points in $P^\ast$ lies on an edge of $T$. Now, we would like to connect the new points in $P^\ast$ to the arborescence~$T$, thus getting a (rectilinear) arborescence for all points in $P \cup P^\ast$ (see Fig.~\ref{fig:example-preRSA}). Formally, we are interested in computing the minimum rectilinear Steiner arborescence $T^\ast$ for $P \cup P^\ast$ with the root $r$ such that ${\rm Un}(T) \subset {\rm Un}(T^\ast)$.

\begin{figure}
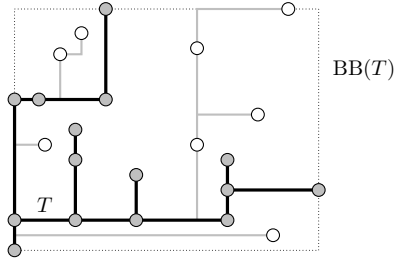

\begin{center}
\pspicture(4,3.5)

\scalebox{0.8}{
\psline[linewidth=0.5pt,linestyle=dotted,dotsep=1pt](0,0)(5,0)(5,4)(0,4)(0,0)

\rput(5.75,3){\small{${\rm BB}(T)$}}
\rput(.5,.75){\small{$T$}}

\psline[linewidth=1pt,linecolor=lightgray](0.75,2.5)(0.75,3.25)(1.1,3.25)(1.1,3.6)
\psline[linewidth=1pt,linecolor=lightgray](4.25,.25)(0,.25)
\psline[linewidth=1pt,linecolor=lightgray](0,1.75)(0.5,1.75)
\psline[linewidth=1pt,linecolor=lightgray](3,.5)(3,4)(4.5,4)
\psline[linewidth=1pt,linecolor=lightgray](3,2.25)(4,2.25)

\psline[linewidth=1.5pt](0,0)(0,2.5)(1.5,2.5)(1.5,4)
\psline[linewidth=1.5pt](0,0.5)(1,0.5)(1,2)
\psline[linewidth=1.5pt](1,0.5)(3.5,0.5)(3.5,1.5)
\psline[linewidth=1.5pt](3.5,1)(5,1)
\psline[linewidth=1.5pt](2,0.5)(2,1.25)

\cnode[linewidth=0.5pt,fillstyle=solid,fillcolor=lightgray](0,0){3pt}{v}
\cnode[linewidth=0.5pt,fillstyle=solid,fillcolor=lightgray](0,0.5){3pt}{v}
\cnode[linewidth=0.5pt,fillstyle=solid,fillcolor=lightgray](0,2.5){3pt}{v}
\cnode[linewidth=0.5pt,fillstyle=solid,fillcolor=lightgray](0.4,2.5){3pt}{v}
\cnode[linewidth=0.5pt,fillstyle=solid,fillcolor=lightgray](1,0.5){3pt}{v}
\cnode[linewidth=0.5pt,fillstyle=solid,fillcolor=lightgray](1,1.5){3pt}{v}
\cnode[linewidth=0.5pt,fillstyle=solid,fillcolor=lightgray](1,2){3pt}{v}
\cnode[linewidth=0.5pt,fillstyle=solid,fillcolor=lightgray](1.5,2.5){3pt}{v}
\cnode[linewidth=0.5pt,fillstyle=solid,fillcolor=lightgray](1.5,4){3pt}{v}
\cnode[linewidth=0.5pt,fillstyle=solid,fillcolor=lightgray](3.5,0.5){3pt}{v}
\cnode[linewidth=0.5pt,fillstyle=solid,fillcolor=lightgray](3.5,1){3pt}{v}
\cnode[linewidth=0.5pt,fillstyle=solid,fillcolor=lightgray](3.5,1.5){3pt}{v}
\cnode[linewidth=0.5pt,fillstyle=solid,fillcolor=lightgray](5,1){3pt}{v}
\cnode[linewidth=0.5pt,fillstyle=solid,fillcolor=lightgray](2,1.25){3pt}{v}
\cnode[linewidth=0.5pt,fillstyle=solid,fillcolor=lightgray](2,.5){3pt}{v}

\cnode[linewidth=0.5pt,fillstyle=solid,fillcolor=white](0.75,3.25){3pt}{v}
\cnode[linewidth=0.5pt,fillstyle=solid,fillcolor=white](0.5,1.75){3pt}{v}
\cnode[linewidth=0.5pt,fillstyle=solid,fillcolor=white](1.1,3.6){3pt}{v}
\cnode[linewidth=0.5pt,fillstyle=solid,fillcolor=white](4.25,.25){3pt}{v}
\cnode[linewidth=0.5pt,fillstyle=solid,fillcolor=white](3,1.75){3pt}{v}
\cnode[linewidth=0.5pt,fillstyle=solid,fillcolor=white](4.5,4){3pt}{v}
\cnode[linewidth=0.5pt,fillstyle=solid,fillcolor=white](4,2.25){3pt}{v}
\cnode[linewidth=0.5pt,fillstyle=solid,fillcolor=white](3,3.35){3pt}{v}
}
\endpspicture
\caption{The RSA problem with the prespecified sub-RSA:\\
an optimal solution is marked with gray.}\label{fig:example-preRSA}
\end{center}
\end{figure}

The above problem formulation is natural, however, when we start with examining some basic properties of an (optimal) solution $T^\ast$ (Lemma~1.\ref{lem:preRSA_in_H} and discussion below it), we are driven to a property which motivates us to introduce a more natural and general problem: the {\sl Rectilinear Steiner Forest Arborescence problem} which we shall discuss in the next paragraph.

\paragraph{The rectilinear Steiner forest arborescence.}
Let us start with some definitions. Let $T=(V(T),E(T))$ be a rectilinear Steiner arborescence for a point set $P$ with the root $r=(0,0)$ in the plane and let $P^\ast$ be a set of points in the bounding box ${\rm BB}(T)$ of $T$. For a vertex $v \in V(T)$, let $\deg_T(v)$ denote the {\sl degree} of $v$ in $T$, that is, the number of edges in $E(T)$ that $v$ is incident to. A vertex $v \in V(T)$ is called {\sl essential}\, if either $\deg_T(v)=1$ or $\deg_T(v)=2$ and  its two adjacent edges (segments) are not parallel; notice that since $T$ is an RSA, it may have degree 2 vertices that are not essential. For a point $p \in P^\ast \cup V(T)$, define its {\sl right extension} as the maximal horizontal line segment~$l$ with the left endpoint at $p$ such that $l$ has no points in common, except its endpoints, with ${\rm Un}(T)$ and the boundary of ${\rm BB}(T)$; the {\sl left, upward} and {\sl downward extensions}, respectively, of $p$ are defined analogously, see Fig.~\ref{fig:vertical-extension-Hanan-subgrid}(a) for an illustration. Notice that an extension may degenerate to a single point.

\begin{figure}[!b]
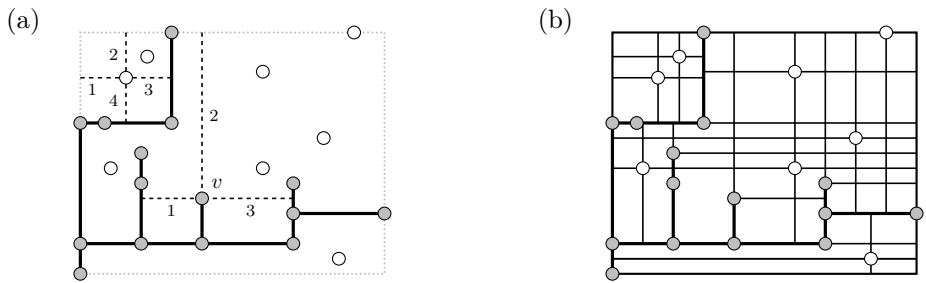

\begin{center}
\pspicture(7,3.75)

\rput(-0.75,3.35){\small{(a)}}

\scalebox{0.8}{
\psline[linewidth=1pt,linestyle=dotted,dotsep=1pt,linecolor=lightgray](0,0)(5,0)(5,4)(0,4)(0,0)

\psline[linewidth=1.5pt](0,0)(0,2.5)(1.5,2.5)(1.5,4)
\psline[linewidth=1.5pt](0,0.5)(1,0.5)(1,2)
\psline[linewidth=1.5pt](1,0.5)(3.5,0.5)(3.5,1.5)
\psline[linewidth=1.5pt](3.5,1)(5,1)
\psline[linewidth=1.5pt](2,0.5)(2,1.25)

\psline[linewidth=0.75pt,linestyle=dashed, dash=2pt 2pt](0,3.25)(1.5,3.25)
\psline[linewidth=0.75pt,linestyle=dashed, dash=2pt 2pt](0.75,4)(0.75,2.5)


\rput(2.25,1.5){\small{$v$}}

\rput(1.5,1.05){\scriptsize{$1$}}
\rput(2.2,2.625){\scriptsize{$2$}}
\rput(0.55,3.625){\scriptsize{$2$}}
\rput(0.2,3.05){\scriptsize{$1$}}
\rput(.55,2.875){\scriptsize{$4$}}
\rput(2.8,1.05){\scriptsize{$3$}}
\rput(1.125,3.05){\scriptsize{$3$}}

\psline[linewidth=0.75pt,linestyle=dashed, dash=2pt 2pt](2,4)(2,1.25)(3.5,1.25)
\psline[linewidth=0.75pt,linestyle=dashed, dash=2pt 2pt](2,1.25)(1,1.25)

\cnode[linewidth=0.5pt,fillstyle=solid,fillcolor=lightgray](0,0){3pt}{v}
\cnode[linewidth=0.5pt,fillstyle=solid,fillcolor=lightgray](0,0.5){3pt}{v}
\cnode[linewidth=0.5pt,fillstyle=solid,fillcolor=lightgray](0,2.5){3pt}{v}
\cnode[linewidth=0.5pt,fillstyle=solid,fillcolor=lightgray](0.4,2.5){3pt}{v}
\cnode[linewidth=0.5pt,fillstyle=solid,fillcolor=lightgray](1,0.5){3pt}{v}
\cnode[linewidth=0.5pt,fillstyle=solid,fillcolor=lightgray](1,1.5){3pt}{v}
\cnode[linewidth=0.5pt,fillstyle=solid,fillcolor=lightgray](1,2){3pt}{v}
\cnode[linewidth=0.5pt,fillstyle=solid,fillcolor=lightgray](1.5,2.5){3pt}{v}
\cnode[linewidth=0.5pt,fillstyle=solid,fillcolor=lightgray](1.5,4){3pt}{v}
\cnode[linewidth=0.5pt,fillstyle=solid,fillcolor=lightgray](3.5,0.5){3pt}{v}
\cnode[linewidth=0.5pt,fillstyle=solid,fillcolor=lightgray](3.5,1){3pt}{v}
\cnode[linewidth=0.5pt,fillstyle=solid,fillcolor=lightgray](3.5,1.5){3pt}{v}
\cnode[linewidth=0.5pt,fillstyle=solid,fillcolor=lightgray](5,1){3pt}{v}
\cnode[linewidth=0.5pt,fillstyle=solid,fillcolor=lightgray](2,1.25){3pt}{v}
\cnode[linewidth=0.5pt,fillstyle=solid,fillcolor=lightgray](2,.5){3pt}{v}

\cnode[linewidth=0.5pt,fillstyle=solid,fillcolor=white](0.75,3.25){3pt}{v}
\cnode[linewidth=0.5pt,fillstyle=solid,fillcolor=white](0.5,1.75){3pt}{v}
\cnode[linewidth=0.5pt,fillstyle=solid,fillcolor=white](1.1,3.6){3pt}{v}
\cnode[linewidth=0.5pt,fillstyle=solid,fillcolor=white](4.25,.25){3pt}{v}
\cnode[linewidth=0.5pt,fillstyle=solid,fillcolor=white](3,1.75){3pt}{v}
\cnode[linewidth=0.5pt,fillstyle=solid,fillcolor=white](4.5,4){3pt}{v}
\cnode[linewidth=0.5pt,fillstyle=solid,fillcolor=white](4,2.25){3pt}{v}
\cnode[linewidth=0.5pt,fillstyle=solid,fillcolor=white](3,3.35){3pt}{v}
}
\endpspicture
\pspicture(4,3.2)
\rput(-0.75,3.35){\small{(b)}}

\scalebox{0.8}{
\psline[linewidth=1pt](0,0)(5,0)(5,4)(0,4)(0,0)

\psline[linewidth=1.5pt](0,0)(0,2.5)(1.5,2.5)(1.5,4)
\psline[linewidth=1.5pt](0,0.5)(1,0.5)(1,2)
\psline[linewidth=1.5pt](1,0.5)(3.5,0.5)(3.5,1.5)
\psline[linewidth=1.5pt](3.5,1)(5,1)
\psline[linewidth=1.5pt](2,0.5)(2,1.25)

\psline[linewidth=0.75pt](0,3.25)(1.5,3.25)
\psline[linewidth=0.75pt](0.75,4)(0.75,2.5)
\psline[linewidth=0.75pt](0,1.75)(1,1.75)
\psline[linewidth=0.75pt](0.5,.5)(0.5,2.5)

\psline[linewidth=0.75pt](1.5,2.5)(5,2.5)
\psline[linewidth=0.75pt](2,4)(2,1.25)(3.5,1.25)

\psline[linewidth=0.75pt](4.25,0)(4.25,1)
\psline[linewidth=0.75pt](0,0.25)(5,0.25)

\psline[linewidth=0.75pt](3.5,0.5)(5,0.5)

\psline[linewidth=0.75pt](0,0.25)(5,0.25)

\psline[linewidth=0.75pt](1,2.5)(1,2)(5,2)
\psline[linewidth=0.75pt](3.5,4)(3.5,1.5)(5,1.5)

\psline[linewidth=0.75pt](1,1.75)(5,1.75)

\psline[linewidth=0.75pt](4.5,4)(4.5,1)

\psline[linewidth=0.75pt](0,2.25)(5,2.25)
\psline[linewidth=0.75pt](4,1)(4,4)

\psline[linewidth=0.75pt](3,4)(3,0.5)
\psline[linewidth=0.75pt](1.5,3.35)(5,3.35)

\psline[linewidth=0.75pt](1.1,4)(1.1,2.5)
\psline[linewidth=0.75pt](0,3.6)(1.5,3.6)

\cnode[linewidth=0.5pt,fillstyle=solid,fillcolor=lightgray](0,0){3pt}{v}
\cnode[linewidth=0.5pt,fillstyle=solid,fillcolor=lightgray](0,0.5){3pt}{v}
\cnode[linewidth=0.5pt,fillstyle=solid,fillcolor=lightgray](0,2.5){3pt}{v}
\cnode[linewidth=0.5pt,fillstyle=solid,fillcolor=lightgray](0.4,2.5){3pt}{v}
\cnode[linewidth=0.5pt,fillstyle=solid,fillcolor=lightgray](1,0.5){3pt}{v}
\cnode[linewidth=0.5pt,fillstyle=solid,fillcolor=lightgray](1,1.5){3pt}{v}
\cnode[linewidth=0.5pt,fillstyle=solid,fillcolor=lightgray](1,2){3pt}{v}
\cnode[linewidth=0.5pt,fillstyle=solid,fillcolor=lightgray](1.5,2.5){3pt}{v}
\cnode[linewidth=0.5pt,fillstyle=solid,fillcolor=lightgray](1.5,4){3pt}{v}
\cnode[linewidth=0.5pt,fillstyle=solid,fillcolor=lightgray](3.5,0.5){3pt}{v}
\cnode[linewidth=0.5pt,fillstyle=solid,fillcolor=lightgray](3.5,1){3pt}{v}
\cnode[linewidth=0.5pt,fillstyle=solid,fillcolor=lightgray](3.5,1.5){3pt}{v}
\cnode[linewidth=0.5pt,fillstyle=solid,fillcolor=lightgray](5,1){3pt}{v}
\cnode[linewidth=0.5pt,fillstyle=solid,fillcolor=lightgray](2,1.25){3pt}{v}
\cnode[linewidth=0.5pt,fillstyle=solid,fillcolor=lightgray](2,.5){3pt}{v}

\cnode[linewidth=0.5pt,fillstyle=solid,fillcolor=white](0.75,3.25){3pt}{v}
\cnode[linewidth=0.5pt,fillstyle=solid,fillcolor=white](0.5,1.75){3pt}{v}
\cnode[linewidth=0.5pt,fillstyle=solid,fillcolor=white](1.1,3.6){3pt}{v}
\cnode[linewidth=0.5pt,fillstyle=solid,fillcolor=white](4.25,.25){3pt}{v}
\cnode[linewidth=0.5pt,fillstyle=solid,fillcolor=white](3,1.75){3pt}{v}
\cnode[linewidth=0.5pt,fillstyle=solid,fillcolor=white](4.5,4){3pt}{v}
\cnode[linewidth=0.5pt,fillstyle=solid,fillcolor=white](4,2.25){3pt}{v}
\cnode[linewidth=0.5pt,fillstyle=solid,fillcolor=white](3,3.35){3pt}{v}
}
\endpspicture
\caption{(a) An illustration
of the left (1), upward (2), right (3) and downward (4) extensions. Notice that the downward extension of $v$
degenerates to a single point.  (b) The subgrid ${\rm H}(T,P^\ast)$.}\label{fig:vertical-extension-Hanan-subgrid}
\end{center}
\end{figure}

Let ${\rm H}(T,P^\ast)$ be the subgrid consisting of all points of (i) the boundary of the bounding box ${\rm BB}(T)$, (ii) the edges of $T$, (iii) all extensions of all points in $P^\ast$, and (iv) only right and upward extensions of all essential vertices of $T$; see Fig.~\ref{fig:vertical-extension-Hanan-subgrid}(b) for an illustration. We have the following lemma.
\begin{lemma}\label{lem:preRSA_in_H}
There exists a minimum rectilinear Steiner arborescence
for $P \cup P^\ast$ with the~root~$r$ and with the prespecified sub-RSA $T$ for $P$ with the~root~$r$
such that it uses only subsegments of the grid ${\rm H}(T,P^\ast)$.
\end{lemma}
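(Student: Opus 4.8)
The plan is to prove this by a perturbation (``segment-sliding'') argument in the style of the classical Hanan-grid theorem, adapted to respect both the arborescence property and the containment $\mathrm{Un}(T)\subseteq\mathrm{Un}(T^{\ast})$. Fix a minimum rectilinear Steiner arborescence $T^{\ast}$ for $P\cup P^{\ast}$ with root $r$ subject to $\mathrm{Un}(T)\subseteq\mathrm{Un}(T^{\ast})$ (the set of such solutions is clearly nonempty). The engine of the argument is the structural fact that, when every edge of $T^{\ast}$ is oriented away from $r=(0,0)$, it points in the $+x$ or the $+y$ direction: for each terminal $q\in P\cup P^{\ast}$ the path in $T^{\ast}$ from $r$ to $q$ has length $\mathrm{x}(q)+\mathrm{y}(q)$, hence is a monotone shortest rectilinear path, and every edge lies on some such path. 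In particular a maximal horizontal segment $s$ of $T^{\ast}$ has all of its sub-edges directed rightward, its left endpoint (unless it equals $r$) receives its unique incoming edge from directly below, and every other vertex on $s$ carries only upward-directed outgoing edges; symmetrically for vertical segments. Note also that a maximal straight sub-segment of $\mathrm{Un}(T^{\ast})$ whose relative interior avoids $\mathrm{H}(T,P^{\ast})$ necessarily contains no edge of $T$, passes through no point of $P^{\ast}$, and lies strictly inside $\mathrm{BB}(T)$ away from the coordinate axes (a segment at height $0$ or at abscissa $0$, or through a point of $P^{\ast}$, or running along an edge of $T$, already meets $\mathrm{H}(T,P^{\ast})$).

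Among all minimum rectilinear Steiner arborescences for $P\cup P^{\ast}$ with root $r$ and prespecified sub-RSA $T$, choose one, again denoted $T^{\ast}$, minimizing lexicographically (a)~the total length of $\mathrm{Un}(T^{\ast})\setminus\mathrm{H}(T,P^{\ast})$ and (b)~the number of maximal segments of $T^{\ast}$ whose relative interior meets the complement of $\mathrm{H}(T,P^{\ast})$. Suppose, for contradiction, that $\mathrm{Un}(T^{\ast})\not\subseteq\mathrm{H}(T,P^{\ast})$, and let $s$ be a maximal straight sub-segment of $\mathrm{Un}(T^{\ast})$ whose relative interior avoids $\mathrm{H}(T,P^{\ast})$; WLOG $s$ is horizontal. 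By the previous paragraph its interior lies strictly inside $\mathrm{BB}(T)$, so we may translate the relative interior of $s$ vertically, the perpendicular edges attached at vertices of $s$ (together with degenerate stubs at any vertices of $T^{\ast}$ lying on $s$) stretching or shrinking accordingly, and with a $+y$-directed stub left at any vertex over which $s$ is moved upward so that the arborescence property persists. Along such a translation the length of $T^{\ast}$ is a sum of absolute-value functions of the displacement, hence convex and piecewise linear, so there is a feasible direction (choosing downward, i.e.\ toward $r$, when there is a tie) in which it does not increase; slide $s$ in that direction until the first ``event.''

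By the orientation fact every edge of $T^{\ast}$ stays $+x$/$+y$-directed, the edges of $T$ are never moved, and the total length does not increase, so we again have a minimum RSA with prespecified sub-RSA $T$; it thus suffices to show that the first event either places the moved portion of $s$ onto $\mathrm{H}(T,P^{\ast})$ or strictly decreases the measure $(a,b)$, contradicting the choice of $T^{\ast}$. The events are: (i)~$s$ reaches the boundary of $\mathrm{BB}(T)$; (ii)~$s$ merges with a parallel edge of $T$, or an endpoint of $s$ slides onto a perpendicular edge of $T$; (iii)~an endpoint of $s$ reaches a point of $P^{\ast}$, or $s$ becomes collinear with an extension of such a point; (iv)~a perpendicular edge incident to $s$ degenerates, that is, a vertex $w$ of $s$ meets the far endpoint $w'$ of that edge. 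The decisive point is case (iv): because the slide is in a non-increasing direction, and moves away from $r$ only when forced (which by the convexity computation requires at least two perpendicular edges pulling that way), a slide toward $r$ must terminate on $\mathrm{Un}(T)$, on the box boundary, or at a $P^{\ast}$-feature, whereas a slide away from $r$ terminates on the upward extension (for vertical $s$, the rightward extension) of an essential vertex of $T$. This is precisely why $\mathrm{H}(T,P^{\ast})$ needs only the forward extensions of essential vertices of $T$, needs all four extensions of the points of $P^{\ast}$ (a connection may approach such a point from any side), and does not need lines through non-essential degree-two vertices of $T$ (which coincide with edges of $T$). I expect the verification of case (iv) to be the main obstacle: one must check that the slide never gets stuck strictly between grid features, that an event which only splits $s$ into an on-grid part and a strictly shorter off-grid part still decreases the measure, and that the process terminates (the measure ranges over a discrete set fixed by the finitely many coordinates occurring in $T$, $P^{\ast}$, and $\mathrm{BB}(T)$). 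The contradiction shows that no such $s$ exists, so $T^{\ast}$ is the desired minimum RSA supported on $\mathrm{H}(T,P^{\ast})$.
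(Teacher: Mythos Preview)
Your proposal is correct and follows the same sliding/perturbation approach as the paper. The paper's proof is shorter because it always slides a maximal off-grid vertical path to the \emph{right} (and symmetrically, horizontal paths upward): by minimality and the arborescence orientation, such a path has at most one horizontal edge attached on its left (the unique incoming edge) and at least one on its right, so the rightward slide never increases the length and terminates either by merging with a longer vertical path or by reaching $\mathrm{H}(T,P^{\ast})$. This one-directional slide sidesteps your convexity argument, your tie-breaking toward $r$, and in particular the case-(iv) analysis you flag as the main obstacle; your bidirectional framework is more general, but for an RSA the paper's observation that the forward direction is always length-non-increasing is the simplification that makes the argument short.
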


\begin{proof}
It follows by arguments similar to those in the proof of Lemma~3 in~\cite{CCGJLZ11}. Namely, let $F$ be a minimum rectilinear Steiner arborescence for $P \cup P^\ast$ with the~root~$r$ and with the pre-specified sub-RSA $T$ for $P$ with the~root~$r$. Suppose now that $F$ does not use only subsegments of the grid ${\rm H}(T,P^\ast)$. The idea is to move the solution into ${\rm H}(T,P^\ast)$ by repeating the following steps. While there exists a vertical edge of $F$ not lying on ${\rm H}(T,P^\ast)$, pick a vertical path $\pi \subseteq {\rm Un}(F)$ of maximum length that does not lie on the grid. Observe that by minimality of $F$, it follows that $\pi$ does not cross an edge of $T$. Since the number of horizontal edges touching $\pi$ from the left is at most one (by minimality of $F$) and the number of horizontal edges touching $\pi$ from the right is at least one, move $\pi$ to the right (by creating and/or extending a left edge and shortening all right edges) until $\pi$ is no longer a vertical path of maximal length (and so overlaps with another vertical path) or $\pi$ reaches the grid ${\rm H}(T,P^\ast)$. Clearly, this will not increase the length of $F$. Finally, eliminate horizontal paths not lying on the grid ${\rm H}(T,P^\ast)$ in the same way.
\end{proof}
\begin{figure}
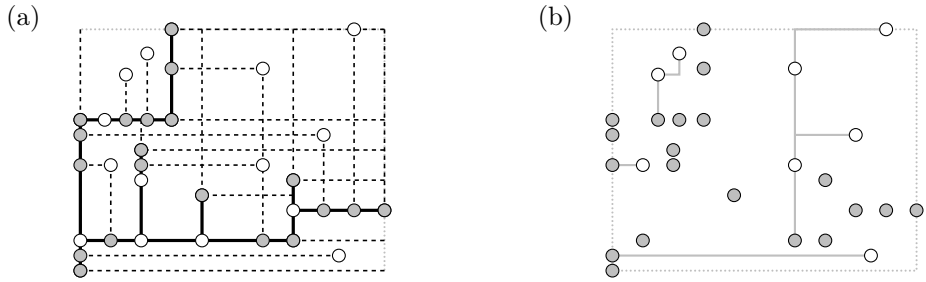

\begin{center}
\pspicture(7,3.2)
\rput(-0.75,3.35){\small{(a)}}

\scalebox{0.8}{

\psline[linewidth=1pt,linestyle=dotted,dotsep=1pt,linecolor=lightgray](5,0)(5,4)(0,4)
\psline[linewidth=0.75pt,linestyle=dashed, dash=2pt 2pt](0,0)(5,0)
\psline[linewidth=0.75pt,linestyle=dashed, dash=2pt 2pt](0,4)(0,0)

\psline[linewidth=1.5pt](0,0)(0,2.5)(1.5,2.5)(1.5,4)
\psline[linewidth=1.5pt](0,0.5)(1,0.5)(1,2)
\psline[linewidth=1.5pt](1,0.5)(3.5,0.5)(3.5,1.5)
\psline[linewidth=1.5pt](3.5,1)(5,1)
\psline[linewidth=1.5pt](2,0.5)(2,1.25)

\psline[linewidth=0.75pt,linestyle=dashed, dash=2pt 2pt](0.75,2.5)(0.75,3.25)
\psline[linewidth=0.75pt,linestyle=dashed, dash=2pt 2pt](0,1.75)(0.5,1.75)
\psline[linewidth=0.75pt,linestyle=dashed, dash=2pt 2pt](0.5,.5)(0.5,1.75)

\psline[linewidth=0.75pt,linestyle=dashed, dash=2pt 2pt](1.5,2.5)(5,2.5)
\psline[linewidth=0.75pt,linestyle=dashed, dash=2pt 2pt](2,4)(2,1.25)(3.5,1.25)

\psline[linewidth=0.75pt,linestyle=dashed, dash=2pt 2pt](0,0.25)(4.25,0.25)

\psline[linewidth=0.75pt,linestyle=dashed, dash=2pt 2pt](3.5,0.5)(5,0.5)

\psline[linewidth=0.75pt,linestyle=dashed, dash=2pt 2pt](5,1)(5,4)

\psline[linewidth=0.75pt,linestyle=dashed, dash=2pt 2pt](1,2.5)(1,2)(5,2)
\psline[linewidth=0.75pt,linestyle=dashed, dash=2pt 2pt](3.5,4)(3.5,1.5)(5,1.5)

\psline[linewidth=0.75pt,linestyle=dashed, dash=2pt 2pt](1,1.75)(3,1.75)

\psline[linewidth=0.75pt,linestyle=dashed, dash=2pt 2pt](1.5,4)(5,4)
\psline[linewidth=0.75pt,linestyle=dashed, dash=2pt 2pt](4.5,4)(4.5,1)

\psline[linewidth=0.75pt,linestyle=dashed, dash=2pt 2pt](0,2.25)(4,2.25)
\psline[linewidth=0.75pt,linestyle=dashed, dash=2pt 2pt](4,1)(4,2.25)

\psline[linewidth=0.75pt,linestyle=dashed, dash=2pt 2pt](3,3.35)(3,0.5)
\psline[linewidth=0.75pt,linestyle=dashed, dash=2pt 2pt](1.5,3.35)(3,3.35)

\psline[linewidth=0.75pt,linestyle=dashed, dash=2pt 2pt](1.1,3.6)(1.1,2.5)

\cnode[linewidth=0.5pt,fillstyle=solid,fillcolor=white](0,0){3pt}{v}
\cnode[linewidth=0.5pt,fillstyle=solid,fillcolor=white](0,0.5){3pt}{v}
\cnode[linewidth=0.5pt,fillstyle=solid,fillcolor=white](0,2.5){3pt}{v}
\cnode[linewidth=0.5pt,fillstyle=solid,fillcolor=white](0.4,2.5){3pt}{v}
\cnode[linewidth=0.5pt,fillstyle=solid,fillcolor=white](1,0.5){3pt}{v}
\cnode[linewidth=0.5pt,fillstyle=solid,fillcolor=white](1,1.5){3pt}{v}
\cnode[linewidth=0.5pt,fillstyle=solid,fillcolor=white](1,2){3pt}{v}
\cnode[linewidth=0.5pt,fillstyle=solid,fillcolor=white](1.5,2.5){3pt}{v}
\cnode[linewidth=0.5pt,fillstyle=solid,fillcolor=white](1.5,4){3pt}{v}
\cnode[linewidth=0.5pt,fillstyle=solid,fillcolor=white](3.5,0.5){3pt}{v}
\cnode[linewidth=0.5pt,fillstyle=solid,fillcolor=white](3.5,1){3pt}{v}
\cnode[linewidth=0.5pt,fillstyle=solid,fillcolor=white](3.5,1.5){3pt}{v}
\cnode[linewidth=0.5pt,fillstyle=solid,fillcolor=white](5,1){3pt}{v}
\cnode[linewidth=0.5pt,fillstyle=solid,fillcolor=white](2,1.25){3pt}{v}
\cnode[linewidth=0.5pt,fillstyle=solid,fillcolor=white](2,.5){3pt}{v}

\cnode[linewidth=0.5pt,fillstyle=solid,fillcolor=white](0.75,3.25){3pt}{v}
\cnode[linewidth=0.5pt,fillstyle=solid,fillcolor=white](0.5,1.75){3pt}{v}
\cnode[linewidth=0.5pt,fillstyle=solid,fillcolor=white](1.1,3.6){3pt}{v}
\cnode[linewidth=0.5pt,fillstyle=solid,fillcolor=white](4.25,.25){3pt}{v}
\cnode[linewidth=0.5pt,fillstyle=solid,fillcolor=white](3,1.75){3pt}{v}
\cnode[linewidth=0.5pt,fillstyle=solid,fillcolor=white](4.5,4){3pt}{v}
\cnode[linewidth=0.5pt,fillstyle=solid,fillcolor=white](4,2.25){3pt}{v}
\cnode[linewidth=0.5pt,fillstyle=solid,fillcolor=white](3,3.35){3pt}{v}

\cnode[linewidth=0.5pt,fillstyle=solid,fillcolor=lightgray](0.75,2.5){3pt}{v}
\cnode[linewidth=0.5pt,fillstyle=solid,fillcolor=lightgray](1.1,2.5){3pt}{v}
\cnode[linewidth=0.5pt,fillstyle=solid,fillcolor=lightgray](0,1.75){3pt}{v}
\cnode[linewidth=0.5pt,fillstyle=solid,fillcolor=lightgray](0.5,.5){3pt}{v}
\cnode[linewidth=0.5pt,fillstyle=solid,fillcolor=lightgray](0,2.25){3pt}{v}
\cnode[linewidth=0.5pt,fillstyle=solid,fillcolor=lightgray](4,1){3pt}{v}
\cnode[linewidth=0.5pt,fillstyle=solid,fillcolor=lightgray](1.5,2.5){3pt}{v}
\cnode[linewidth=0.5pt,fillstyle=solid,fillcolor=lightgray](2,1.25){3pt}{v}
\cnode[linewidth=0.5pt,fillstyle=solid,fillcolor=lightgray](0,.25){3pt}{v}
\cnode[linewidth=0.5pt,fillstyle=solid,fillcolor=lightgray](1,2){3pt}{v}
\cnode[linewidth=0.5pt,fillstyle=solid,fillcolor=lightgray](3.5,1.5){3pt}{v}
\cnode[linewidth=0.5pt,fillstyle=solid,fillcolor=lightgray](3,.5){3pt}{v}
\cnode[linewidth=0.5pt,fillstyle=solid,fillcolor=lightgray](1.5,3.35){3pt}{v}
\cnode[linewidth=0.5pt,fillstyle=solid,fillcolor=lightgray](1,1.75){3pt}{v}
\cnode[linewidth=0.5pt,fillstyle=solid,fillcolor=lightgray](1.5,4){3pt}{v}
\cnode[linewidth=0.5pt,fillstyle=solid,fillcolor=lightgray](4.5,1){3pt}{v}

\cnode[linewidth=0.5pt,fillstyle=solid,fillcolor=lightgray](0,0){3pt}{v}
\cnode[linewidth=0.5pt,fillstyle=solid,fillcolor=lightgray](5,1){3pt}{v}
\cnode[linewidth=0.5pt,fillstyle=solid,fillcolor=lightgray](0,2.5){3pt}{v}

\cnode[linewidth=0.5pt,fillstyle=solid,fillcolor=lightgray](3.5,.5){3pt}{v}

}
\endpspicture
\pspicture(4,3.2)
\rput(-0.75,3.35){\small{(b)}}

\scalebox{0.8}{

\psline[linewidth=1pt,linestyle=dotted,dotsep=1pt,linecolor=lightgray](0,0)(5,0)(5,4)(0,4)(0,0)

\psline[linewidth=1pt,linecolor=lightgray](0.75,2.5)(0.75,3.25)(1.1,3.25)(1.1,3.6)
\psline[linewidth=1pt,linecolor=lightgray](4.25,.25)(0,.25)
\psline[linewidth=1pt,linecolor=lightgray](0,1.75)(0.5,1.75)
\psline[linewidth=1pt,linecolor=lightgray](3,.5)(3,4)(4.5,4)
\psline[linewidth=1pt,linecolor=lightgray](3,2.25)(4,2.25)

\cnode[linewidth=0.5pt,fillstyle=solid,fillcolor=white](0.75,3.25){3pt}{v}
\cnode[linewidth=0.5pt,fillstyle=solid,fillcolor=white](0.5,1.75){3pt}{v}
\cnode[linewidth=0.5pt,fillstyle=solid,fillcolor=white](1.1,3.6){3pt}{v}
\cnode[linewidth=0.5pt,fillstyle=solid,fillcolor=white](4.25,.25){3pt}{v}
\cnode[linewidth=0.5pt,fillstyle=solid,fillcolor=white](3,1.75){3pt}{v}
\cnode[linewidth=0.5pt,fillstyle=solid,fillcolor=white](4.5,4){3pt}{v}
\cnode[linewidth=0.5pt,fillstyle=solid,fillcolor=white](4,2.25){3pt}{v}
\cnode[linewidth=0.5pt,fillstyle=solid,fillcolor=white](3,3.35){3pt}{v}

\cnode[linewidth=0.5pt,fillstyle=solid,fillcolor=lightgray](0.75,2.5){3pt}{v}
\cnode[linewidth=0.5pt,fillstyle=solid,fillcolor=lightgray](1.1,2.5){3pt}{v}
\cnode[linewidth=0.5pt,fillstyle=solid,fillcolor=lightgray](0,1.75){3pt}{v}
\cnode[linewidth=0.5pt,fillstyle=solid,fillcolor=lightgray](0.5,.5){3pt}{v}
\cnode[linewidth=0.5pt,fillstyle=solid,fillcolor=lightgray](0,2.25){3pt}{v}
\cnode[linewidth=0.5pt,fillstyle=solid,fillcolor=lightgray](4,1){3pt}{v}
\cnode[linewidth=0.5pt,fillstyle=solid,fillcolor=lightgray](1.5,2.5){3pt}{v}
\cnode[linewidth=0.5pt,fillstyle=solid,fillcolor=lightgray](2,1.25){3pt}{v}
\cnode[linewidth=0.5pt,fillstyle=solid,fillcolor=lightgray](0,.25){3pt}{v}
\cnode[linewidth=0.5pt,fillstyle=solid,fillcolor=lightgray](1,2){3pt}{v}
\cnode[linewidth=0.5pt,fillstyle=solid,fillcolor=lightgray](3.5,1.5){3pt}{v}
\cnode[linewidth=0.5pt,fillstyle=solid,fillcolor=lightgray](3,.5){3pt}{v}
\cnode[linewidth=0.5pt,fillstyle=solid,fillcolor=lightgray](1.5,3.35){3pt}{v}
\cnode[linewidth=0.5pt,fillstyle=solid,fillcolor=lightgray](1,1.75){3pt}{v}
\cnode[linewidth=0.5pt,fillstyle=solid,fillcolor=lightgray](1.5,4){3pt}{v}
\cnode[linewidth=0.5pt,fillstyle=solid,fillcolor=lightgray](4.5,1){3pt}{v}

\cnode[linewidth=0.5pt,fillstyle=solid,fillcolor=lightgray](0,0){3pt}{v}
\cnode[linewidth=0.5pt,fillstyle=solid,fillcolor=lightgray](5,1){3pt}{v}
\cnode[linewidth=0.5pt,fillstyle=solid,fillcolor=lightgray](0,2.5){3pt}{v}

\cnode[linewidth=0.5pt,fillstyle=solid,fillcolor=lightgray](3.5,.5){3pt}{v}

}
\endpspicture
\caption{(a) Grey points constitute the set $R$ of roots: essential vertices of the pre-specified RSA and all essential endpoints of the left/downward extensions of points in $P^\ast$. (b) The RFSA problem corresponding to the RSA problem with the pre-specified RSA in Fig.~\ref{fig:example-preRSA}.}\label{fig:MRFSA}
\end{center}
\end{figure}
The left endpoint of the left extension (resp.\ the bottom endpoint of the downward extension) of a point $p \in P^\ast$ is called {\sl essential} if it is a point of ${\rm Un}(T)$, that is, it lies on an edge of~$T$. Let $R$ be the set of all essential vertices of $T$ and all essential endpoints of the left/downward extensions of points in $P^\ast$ (see Fig.~\ref{fig:MRFSA}(a)); notice that $|R|=O(|P^\ast|+|V(T)|)$. Taking into account the above lemma, one can observe that there exists an optimal solution~$T^\ast$, that is, a minimum rectilinear Steiner arborescence $T^\ast$ for $P \cup P^\ast$ with the~root~$r$ and with the pre-specified sub-RSA $T$ for $P$ (with the~root~$r$), such that each point in $P^\ast$ is connected to $T$ with an rectilinear path whose one endpoint is in $R$. And so, rather then considering the pre-specified rectilinear Steiner arborescence $T$, in particular, its edge set $E(T)$, we may restrict ourselves only to the set $P^\ast$ of points and, as possible roots, to the set of all points in $R$. This observation is crucial, and hence we introduce the following general problem (see Fig.~\ref{fig:MRFSA}(b)).
%
\begin{description}
\item[The Rectilinear Steiner Forest Arborescence (RSFA) problem.]~\\
Given two disjoint sets $P$ and $R$ of $n$ points and $m$ roots, respectively, lying in the first quadrant of $\mathbb{R}^2$ and such that $(0,0) \in R$, find the minimum rectilinear spanning forest $F$, called {\sl a rectilinear Steiner forest arborescence for $P$ with the root set $R$}, such that each of connected components of $F$ is an RSA rooted at some root in $R$, that is, for each point $p \in P$, there exists a root $r \in R$ and a path $\pi$ in $F$ such that the length of $\pi$ is equal to $({\rm x}(p)-{\rm x}(r))+({\rm y}(p)-{\rm y}(r))$.
\end{description}

As alluded above, our restatement of the problem definition asymptotically preserves the size of the input, and thus, any efficient algorithm for the RSFA problem can be applied to the RSA problem with the pre-specified sub-RSA, the problem defined formerly. Next, since the RSA problem is a special case of the RSFA problem where $R=\{(0,0)\}$, bearing in mind~\cite{SS-2006},
we immediately obtain the NP-hardness of the RSFA problem.
\begin{corollary}\label{cor-NP-MRSFA}
The RSFA problem is NP-hard.
\end{corollary}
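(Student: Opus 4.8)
The plan is to exhibit a trivial (identity) polynomial-time reduction from the Rectilinear Steiner Arborescence (RSA) problem, whose NP-hardness was established by Shi and Su~\cite{SS-2006}, to the RSFA problem. Given an RSA instance, namely a finite point set $P$ in the first quadrant $Q_1$ of $\mathbb{R}^2$ together with the root $r=(0,0)$ (recall that in an RSA instance every $p \in P$ has coordinates at least those of $r$, which is automatic here since $r=(0,0)$), I would output the RSFA instance with the same point set $P$ and root set $R=\{(0,0)\}$. The only degenerate case to dispose of is $(0,0)\in P$: then simply delete it from $P$, since the origin is trivially spanned (it coincides with the unique root) and its removal changes neither the optimal value nor feasibility. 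This transformation is clearly computable in linear time, and it preserves the input requirements of an RSFA instance: $P$ and $R$ are disjoint and $(0,0)\in R$.

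Next I would verify that the reduction is value-preserving. Since $R$ is the singleton $\{(0,0)\}$, every connected component of any rectilinear Steiner forest arborescence $F$ for $P$ with root set $R$ must be an RSA rooted at $(0,0)$; moreover any two such components both contain the point $(0,0)$, so their union is again acyclic only if there is a single component, and in any case one can merge all components at the common root $(0,0)$ without increasing the total weight $l(\cdot)$ and without violating the shortest-path (arborescence) condition for any point of $P$. Hence a minimum rectilinear Steiner forest arborescence for $P$ with $R=\{(0,0)\}$ has weight equal to that of a minimum rectilinear Steiner arborescence for $P$ with root $(0,0)$, and conversely an MRSA for $P$ with root $(0,0)$ is itself a feasible (and therefore optimal) RSFA solution for the constructed instance. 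Thus the optimal values of the two instances coincide, and an optimal solution of one is readily converted into an optimal solution of the other.

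Combining the two observations with the NP-hardness of RSA~\cite{SS-2006} yields that the RSFA problem is NP-hard, as claimed. There is essentially no technical obstacle in this argument; the only points meriting a line of justification are the preservation of the structural requirements on an RSFA instance and the merging-at-the-root step that equates the optimum of the singleton-root RSFA instance with the RSA optimum.

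\eop

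\bigskip
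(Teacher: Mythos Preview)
Your proposal is correct and follows exactly the same approach as the paper: the paper simply observes (in the sentence immediately preceding the corollary) that the RSA problem is the special case of the RSFA problem with $R=\{(0,0)\}$, and invokes the NP-hardness result of Shi and Su~\cite{SS-2006}. Your write-up is more careful than the paper's one-line justification (handling the degenerate case $(0,0)\in P$ and spelling out why the optima coincide), but the underlying reduction is identical.
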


It is worth pointing out that a restricted variant of the RSFA problem, called the $(1,k)$-MRDPT problem (the Minimum Cost Rectilinear Distance Preserving Tree problem), where $k$ is the number of non-root input points, has been studied in~\cite{TS-1995}, for the purpose of a new efficient heuristic for the four-quadrant RSA problem. In this variant, we have one ``independent'' layer of roots and several independent points in a second layer. And, based upon dynamic programming, T\'ellez and Sarrafzadeh provided a quadratic time algorithm that optimally solves the $(1,k)$-MRDPT problem; see~\cite{TS-1995} for more details.

\paragraph{Our contribution.} 
In this paper, we rely on Corollary~1.\ref{cor-NP-MRSFA} and the algorithmic results for the RSA problem~\cite{CC-2018,LC-1997,LR-2000} in order to provide an exact exponential time algorithm (Section~\ref{sec:Exact}), to propose a PTAS (Section~\ref{sec:PTAS}) and an FPT algorithm (Section~\ref{sec:FPT}). Finally, some future  improvements as well as  some open problems, related to Manhattan networks, are discussed.

\paragraph{Notation.} Unless otherwise stated, all points lie in the first quadrant of $\mathbb{R}^2$ and the origin $(0,0)$ always belongs to the root set $R$. For a point $p$, $||p||$ denotes the sum of coordinates ${\rm x}(p)+{\rm y}(p)$, while the nearest root in $R$ that covers $p$ (ties are broken arbitrarily) is denoted by $r(p)$; notice that if $p$ is a root itself, then $r(p)=p$. For two points $p$ and $q$, the shortest distance between two points $p$ and $q$ (distances are measured in the $L_1$-metric) is denoted with ${\rm dist}(p,q)$,  the point with coordinates $(\min\{{\rm x}(p),{\rm x}(q)\},\min\{{\rm y}(p),{\rm y}(q)\})$ is denoted with $\langle p,q \rangle$, and a point $p$ is said to {\sl cover} $q$ if and only if ${\rm x}(p) \le {\rm x}(q)$ and
${\rm y}(p) \le {\rm y}(q)$. Finally, we shall use an abbreviation RSFA for a rectilinear Steiner forest arborescence.

\section{Exact algorithm}\label{sec:Exact}
In 1992, Ho et al.\ presented an $O(n^2\, 3^n)$ time algorithm for the problem of determining a minimum rectilinear Steiner arborescence of $n$ points in the plane~\cite{HKMS-1992}. Their algorithm is based upon a dynamic programming approach, it originates a classical algorithm for the minimum Steiner tree problem in~\cite{DW72}, and as we show below, its time complexity can be slightly improved to $O(3^n)$, which then leads to an exponential time algorithm that optimally solves the RSFA problem in $O(mn^2+ 3^n)$ time, where $n$ and $m$ is the number of input points and roots, respectively. 

Let $o=(0,0)$ be the origin, let $P$ be a set of $n$ points in the first quadrant $Q_1$ of the plane (excluding the origin~$o$). For a non-empty subset $X \subseteq P$, let $r(X)=(\min_{p \in P}\cx(p),  \min_{p \in P}\cy(p))$ denote its {\sl local root}. We  observe that the local root $r(X)$ of  $X$ can be determined in $\Theta(|X|)$ time, and hence all the subsets of $P$ together with their local roots can be determined in total $O(n\, 2^n) = O(3^n)$ time, in the preprocessing step. (On the other hand,  notice that although there are $2^n$ subsets of $P$, there are only $O(n^2)$ distinct roots in total.) Next, let $A(X)$ denote the minimum rectilinear Steiner arborescence for $X$ with the root $r(X)$. Now, following~\cite{HKMS-1992}, taking into account the definition of the local root, we observe that the length $l(A(X))$ can be expressed by the following dynamic programming recursion (Bellman equation).
$$l(A(X))=\left\{
\begin{array}{ll}
0 & \textnormal{if } |X|=1;\\
\min_{\, U \varsubsetneq X}\ l(A(U)) + {\rm dist}(r(X),r(U)) & \\
{\white \min_{\, U \varsubsetneq X}} + l(A(X \setminus U))+ {\rm dist}(r(X),r(X \setminus U)) & \textnormal{otherwise.}
\end{array}\right.
$$

We tabulate the values $l(A(X))$ 
with respect to the increasing size of $X$: handling the case $|X|=1$ requires $O(1)$ time, while for $|X| \ge 2$, evaluation of $l(A(X))$ requires consideration of $O(2^{|X|})$ choices of $U$ in evaluating the recursion. Thus, the overall running time is
$$\sum_{k=1}^n \left[{n \choose k} \cdot 2^k \cdot O(1)\right]=O(3^n).$$
Finally, to determine the optimal RSA $A_o(P)$ for $P$ with the root $o=(0,0)$, all we need is to observe that $l(A_o(P))=l(A(P))+{\rm dist}(o,r(P))$. Therefore, we may conclude that the RSA problem (in the first quadrant $Q_1$ of the plane) can be solved in $O(3^n)$ time.

Next, let $P$ and $R$ be two disjoint sets of $n$ points and $m$ roots, respectively, all lying in the first quadrant of $\mathbb{R}^2$ and such that $(0,0) \in R$. For a subset $X \subseteq P$, let $F(X,R)$ denote an optimal rectilinear Steiner forest arborescence for $X$ with the root set $R$; clearly, we are interested in determining $F(P,R)$. Let $H=H(P \cup R)$ be the Hanan grid of $P\cup R$, that is, the grid formed by the vertical and horizontal straight lines passing through all points in $P \cup R$~\cite{H-1966}; notice that each point in $P \cup R$ corresponds to some intersection grid point of $H$.

\begin{lemma}\label{lem:Hanan}
There exists an optimal RSFA for $X$ with the root set $R$ such that it uses only subsegments of the Hanan grid $H(X \cup R) \subseteq H(P \cup R)$.
\end{lemma}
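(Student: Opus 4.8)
The plan is to follow the classical Hanan-grid sliding argument --- essentially the one used in the proof of Lemma~1.\ref{lem:preRSA_in_H} and in Lemma~3 in~\cite{CCGJLZ11} --- while checking in addition that every sliding step keeps each connected component a rectilinear Steiner arborescence rooted at a point of $R$. I would start from an arbitrary optimal RSFA $F$ for $X$ with the root set $R$. Since each component of $F$ is an RSA, the path of $F$ from any $p\in X$ to its root $r$ is $x$- and $y$-monotone, so every horizontal edge of $F$ is oriented towards increasing $x$ and every vertical edge towards increasing $y$ (away from the roots); in particular each such path --- hence the whole of $F$ --- lies in the bounding box of $X\cup R$, so any off-grid abscissa used by $F$ lies strictly between two consecutive grid abscissae. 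By optimality I may also assume that $F$ has no degree-$1$ vertex outside $X\cup R$ and no removable edge.

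Assume $F$ uses a vertical edge whose abscissa is not that of any point of $X\cup R$ (otherwise I proceed directly to the symmetric step for horizontal edges), and let $\xi$ be the \emph{largest} such abscissa, with $\xi^+$ the smallest grid abscissa exceeding $\xi$. Pick a maximal vertical path $\pi\subseteq{\rm Un}(F)$ on the line $x=\xi$; note $\pi$ contains no point of $X\cup R$ (else $\xi$ would be a grid abscissa). Since $\xi$ is the largest off-grid abscissa, no vertical edge of $F$ has abscissa in $(\xi,\xi^+)$; and since all horizontal edges point towards increasing $x$ while $\pi$ is maximal, exactly one horizontal edge meets $\pi$ from the left (the parent edge of the lowest vertex of $\pi$) and at least one meets it from the right (the topmost vertex of $\pi$ is not a leaf, as a leaf of $F$ would belong to $X\cup R$ and hence lie on the grid). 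Sliding $\pi$ rightwards --- lengthening the single left edge, shortening the right edges, and contracting any right edge that collapses to length zero (legitimate in a forest, where every edge is a bridge) --- therefore does not increase $l(F)$ and can be carried out until $\pi$ reaches $x=\xi^+$, without $\pi$ ever running into another vertical edge of $F$ in the process. The resulting $F$ is still a planar forest, with the same partition of $X$ into components and the same root for each component, and with every root-to-point path still monotone of length ${\rm dist}(p,r)$; hence it is still an optimal RSFA. Repeating this drives all vertical edges onto grid abscissae, and the mirror-image procedure on horizontal edges (largest off-grid ordinate first) finishes the job; since $X\subseteq P$ we have $H(X\cup R)\subseteq H(P\cup R)$, which gives the claim.

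The step I expect to be the real work is not the length accounting --- identical to the ordinary Steiner-tree case --- but the structural verification: one must confirm that handling off-grid abscissae from right to left (and ordinates from top to bottom) prevents a sliding segment from colliding with another off-grid segment, so that the only ``events'' during a slide are contractions of zero-length bridges inside a single component (which neither create cycles nor merge two components with different roots); and one must confirm the ``exactly one incoming / at least one outgoing edge'' dichotomy along $\pi$, which rests on the monotonicity of arborescence paths and on the absence of removable edges in an optimal $F$. Granting these points, the final $F$ is an optimal RSFA lying entirely on $H(X\cup R)$.
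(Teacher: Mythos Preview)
Your proposal is correct and follows essentially the same sliding argument as the paper --- indeed, the paper's own proof of this lemma is a single sentence (``It follows by arguments similar to those in the proof of Lemma~1.\ref{lem:preRSA_in_H}; we omit details''), so you have supplied considerably more detail than the authors. Your refinement of processing off-grid abscissae from largest to smallest (rather than in arbitrary order, as in the proof of Lemma~1.\ref{lem:preRSA_in_H}) is a nice simplification: it rules out mid-slide collisions with other \emph{off-grid} vertical segments, so the only events are the bridge-contractions you describe. One small point you may wish to tighten: when $\pi$ finally lands on $x=\xi^+$ it can still overlap a vertical segment $\sigma$ that was \emph{already} on the grid line $x=\xi^+$ and belongs to a different component; your right-to-left ordering does not exclude this. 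The easiest fix is the optimality argument you already have available --- any such overlap of positive length strictly shortens $F$, contradicting optimality, and a single-point contact can be shown (via the ``bottom of $\sigma$ has a parent edge going left, which must meet $\pi$'' reasoning you use elsewhere) to force $\sigma$ into the same component as $\pi$. With that addendum, the structural verification you flag as ``the real work'' goes through.
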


\begin{proof}
It follows by arguments similar to those in the proof of Lemma~1.\ref{lem:preRSA_in_H}; we omit details.
\end{proof}

Taking into account the above lemma, the idea of
our dynamic programming approach is as follows.
First, for each possible location $q$ of the local root of a subset of $P$,
we determine (and tabulate) the root $r(q) \in R$ and the distance ${\rm dist}(q,r(q))$.
Notice that since the coordinates of such local root are defined by 
the relevant $x$- and $y$-coordinates of at most two points in $P$, 
there are at $O(n^2)$ possible locations for roots, and so this step can be done in $O(mn^2)$ time.
Next, for each non-empty subset $X$ of $P$, we determine (and tabulate)
the minimum rectilinear Steiner arborescence $A(X) \subset H(X)$ for $X$ with the root $r(X)$;
following the above approach for the RSA problem,
this can be done in total $O(3^n)$ time.
We also determine (and tabulate) the minimum rectilinear Steiner arborescence
$F^\ast(X,R) \subset H(X \cup R)$ for $X$ with the root $r(r(X)) \in R$;
this can be done in $O(2^n)$ time (since we have already tabulated all $A(X)$'s).

Now, having tabulated all the above data, for each point $p \in P$,
we set $l(F(\{p\},R)) \doteq {\rm dist}(p,r(p))$ and set $F(\{p\},R)$
as any shortest path realizing $l(F(\{p\},R))$ within $H(\{p\} \cup R)$;
this takes $O(n)$ time (since the local root $r(\{p\})=p$ and we have already computed 
${\rm dist}(r(\{p\}),r(p))$ in the first step).
Assume now that for any at most $k$-element subset $Y$ of $P$,
$1 \le k < n$, we have already computed and stored $F(Y,R)$.
Let $Z$ be a $(k+1)$-element subset of $P$.
Observe that $F(Z,R)$ either consists of one connected
component (and so $F(Z,R)=F^\ast(Z,R)$)
or has at least two connected components
(and so $F(Z,R)=F(Z_1,R) \cup F(Z_2,R)$ for some partition $Z_1 \cup Z_2$ of $Z$).
In particular,
$$l(F(Z,R))=\min \{l(F^\ast(Z,R),) \min_{Y \subsetneq Z}(l(F(Y,R))+l(F(Z\setminus Y,R))\}.$$
Since determining $\min_{Y \subsetneq Z}(l(F(Y,R))+l(F(Z\setminus Y,R))\}$ takes $O(2^{k+1})$ time,
the overall time complexity of our approach for determining $F(P,R)$
is $$O(mn^2) +O(3^n)+O(2^n)+O(n)+\sum_{k=2}^n  \left[ {n \choose k} \cdot  2^k \cdot  O(1) \right]=O(mn^2+ 3^n).$$
Therefore, we may conclude with the following theorem.

\begin{theorem}\label{thm:Exact}
The RSFA problem can be solved in $O(mn^2 + 3^n)$ time.
\end{theorem}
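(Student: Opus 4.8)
The plan is to follow the dynamic programming template already laid out in the text, essentially verifying that the pieces fit together with the claimed running times. The statement to prove is really the summary of the construction preceding it, so the proof is an exposition of three ingredients plus a cost accounting.

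First, I would establish the improved $O(3^n)$ bound for computing the minimum RSA $A(X)$ of a single point set with respect to its local root $r(X)$. The key observation, as in~\cite{HKMS-1992}, is the Bellman equation splitting an optimal arborescence at the local root: either $|X|=1$ and the tree is trivial, or some subtree rooted at a descendant local root $r(U)$ can be detached, giving $l(A(X)) = \min_{U \subsetneq X}\, l(A(U)) + \dist(r(X),r(U)) + l(A(X\setminus U)) + \dist(r(X),r(X\setminus U))$. Correctness here reduces to two facts: (i) in an optimal RSA one may assume every maximal path into the root is ``staircase-monotone'', so the tree decomposes at the root into at most two subtrees hanging off the two incident directions, each being an optimal RSA for its own point set with its own local root (Lemma~1.\ref{lem:Hanan} legitimizes restricting to the Hanan grid); and (ii) the concatenation cost is exactly the $L_1$-distance between the two roots, since the connecting path is monotone. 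Tabulating over subsets in increasing cardinality gives the running time $\sum_{k=1}^n \binom{n}{k} 2^k = 3^n$ after the preprocessing of all local roots in $O(n 2^n)=O(3^n)$ time.

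Second, I would handle the root set $R$. For a subset $X$, its contribution to the forest is either a single arborescence — which, by the restriction-to-Hanan-grid lemma (Lemma~1.\ref{lem:Hanan}), can be taken to route through the local root $r(X)$ and then along a shortest monotone path to the nearest covering root $r(r(X)) \in R$, giving $F^\ast(X,R)$ of length $l(A(X)) + \dist(r(X), r(r(X)))$ — or it splits into two nonempty sub-forests. This yields $l(F(Z,R)) = \min\{\, l(F^\ast(Z,R)),\ \min_{Y \subsetneq Z} l(F(Y,R)) + l(F(Z\setminus Y,R)) \,\}$. The only point needing care is that a point of $P$ can serve as an internal Steiner vertex for another component, so one should argue that an optimal solution still decomposes into edge-disjoint arborescences over a partition of $P$; this follows because any shared vertex can be used to cut the forest, and detaching a component never increases total length (the detached component is rerouted to its own nearest root, which is no farther because covering roots of a subset dominate those of a superset — or, more simply, because we minimize over all partitions so the DP value lower-bounds any valid decomposition, and conversely the DP always produces a valid forest).

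Third, the cost accounting. Preprocessing the $O(n^2)$ candidate local-root locations against the $m$ roots costs $O(mn^2)$; computing all $A(X)$ costs $O(3^n)$; computing all $F^\ast(X,R)$ from the tabulated $A(X)$ costs $O(2^n)$; the base cases cost $O(n)$ each over $n$ singletons; and the main recursion costs $\sum_{k=2}^n \binom{n}{k} 2^k = O(3^n)$. Summing gives $O(mn^2 + 3^n)$, and reading off $F(P,R)$ completes the proof.

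The main obstacle is not the running time but pinning down the structural claim in the second step: that an optimal RSFA decomposes as an edge-disjoint union of optimal single-root arborescences over a partition of $P$, even when input points of one component act as Steiner points helping another. I would resolve this by an exchange argument — take an optimal forest, pick any connected component, and show it can be assumed to attach to $R$ only through its own local root without loss (rerouting the ``downward/leftward'' stub to the nearest covering root costs no more by the triangle inequality in $L_1$ together with monotonicity), after which the standard Steiner-tree decomposition at the local root applies recursively.
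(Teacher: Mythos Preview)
Your proposal is correct and follows essentially the same approach as the paper: the same local-root Bellman recursion for $A(X)$ giving the $O(3^n)$ bound, the same $O(mn^2)$ preprocessing of nearest covering roots, the same definition of $F^\ast(X,R)$, and the same partition-based recursion for $F(Z,R)$, with identical cost accounting. The one remark is that the ``obstacle'' you flag --- points of one component acting as Steiner points for another --- is not actually an issue (components of a forest are vertex-disjoint, so the partition of $P$ is immediate); the genuinely delicate step, which the paper also treats as a bare observation, is that a single-component optimum necessarily equals $F^\ast(Z,R)$, i.e.\ that the optimal RSA for $Z$ rooted at any covering $r\in R$ has length $l(A(Z))+\dist(r(Z),r)$.
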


Note that if the number~$n$ of points is $O(\log m)$,
we obtain a polynomial time algorithm for the RSFA problem.

\section{Polynomial time approximation scheme}\label{sec:PTAS}

A standard approach for approximation algorithms for geometric optimization problems builds on the techniques developed for polynomial-time approximation schemes (PTAS) for geometric optimization problems due to Arora~\cite{A-1998} and Mitchell~\cite{M-1999}.
In the former approach, to obtain a $(1 + \epsilon)$-approximation in $n^{O(1/\epsilon)}$ time, the instance is first scaled and perturbated to vertices of a polynomial-size integer grid. Next, the grid is recursively partitioned into dissection squares using a quadtree of logarithmic depth, and the so-called Structure Theorem is proved, which guarantees the existence of an almost-optimal solution that crosses the boundary of each dissection square only a few times and only in a number (depending on $\frac{1}{\epsilon}$) of pre-specified portals. Finally, to find a solution satisfying the Structure Theorem, dynamic programming is employed  over the recursive decomposition, and by introducing the concept of portals, the running time becomes $n^c (\log n)^{O(\frac{1}{\epsilon})}$, where $n$ is the size of an input instance, $\epsilon$ is any given constant $0 < \epsilon < 1$, and $c$ is a constant. 

In particular, Arora's method has been successfully applied to solve the RSA problem~\cite{LR-2000,Z-2000},  but without the portal technique and thus obtaining a $(1 + \epsilon)$-approximation only in $n^{O(1/\epsilon)}$ time.\footnote{As observed in~\cite{ChDKN02}, the portal technique seems to be non-applicable to a class of optimization problems with the ,,monotone path'' requirement.} It is natural then to try to apply this technique for the RSFA problem as well. However, an instance depicted in Fig.~\ref{fig:not-Arrora} shows that rounding in Arora's technique may result to optimal solutions that cannot be transformed to the sub-optimal one for the original input, by a small relative cost increase. Consequently, rather then taking an effort in adapting Arora's technique, we make use of the latter aforementioned method, that is, the idea of $c$-guillotine subdivisions proposed by Mitchell~\cite{M-1999}, which has been also successfully applied to the {\sl symmetric} rectilinear Steiner arborescence problem~\cite{CDL-2001}.\footnote{The difference between the RSA problem and its symmetric variant is that in the latter one, the constraint for a path from a point in $P$ to the root is weaker, namely, such a path must be only $y$-monotone~\cite{PS-1985}; in the RSA problem, such a path must be both $x$- and $y$-monotone.}

\begin{figure}
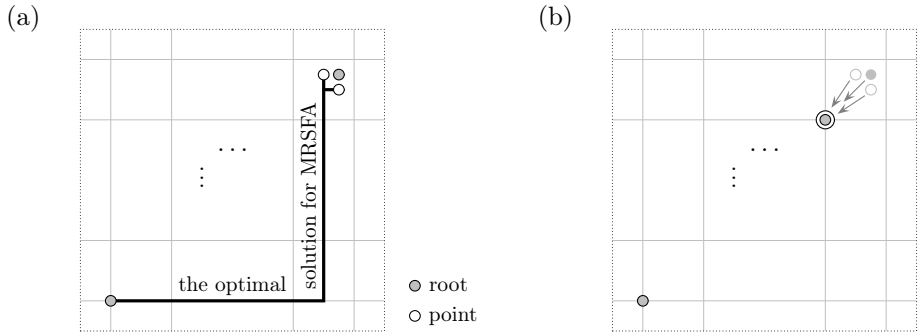

\begin{center}
\pspicture(7,4)
\rput(-0.75,4.15){\small{(a)}}
\scalebox{0.8}{

\psline[linewidth=0.25pt,linecolor=lightgray](0,0.5)(5,0.5)
\psline[linewidth=0.25pt,linecolor=lightgray](0,1.5)(5,1.5)
\psline[linewidth=0.25pt,linecolor=lightgray](0,3.5)(5,3.5)
\psline[linewidth=0.25pt,linecolor=lightgray](0,4.5)(5,4.5)

\psline[linewidth=0.25pt,linecolor=lightgray](0.5,0)(0.5,5)
\psline[linewidth=0.25pt,linecolor=lightgray](1.5,0)(1.5,5)
\psline[linewidth=0.25pt,linecolor=lightgray](3.5,0)(3.5,5)
\psline[linewidth=0.25pt,linecolor=lightgray](4.5,0)(4.5,5)

\psline[linewidth=0.5pt,linestyle=dotted,dotsep=1pt](0,0)(5,0)(5,5)(0,5)(0,0)

\psline[linewidth=1.5pt,linecolor=black](4,4.25)(4,0.5)(0.5,0.5)
\psline[linewidth=1.5pt,linecolor=black](4,4)(4.25,4)

\cnode[linewidth=0.5pt,fillstyle=solid,fillcolor=lightgray](0.5,0.5){2.5pt}{v}
\cnode[linewidth=0.5pt,fillstyle=solid,fillcolor=white](4,4.25){2.5pt}{v}
\cnode[linewidth=0.5pt,fillstyle=solid,fillcolor=white](4.25,4){2.5pt}{v}

\cnode[linewidth=0.5pt,fillstyle=solid,fillcolor=lightgray](4.25,4.25){2.5pt}{v}

\cnode[linewidth=0.5pt,fillstyle=solid,fillcolor=lightgray](5.5,0.75){2.5pt}{v}
\cnode[linewidth=0.5pt,fillstyle=solid,fillcolor=white](5.5,0.25){2.5pt}{v}
\rput(6.13,0.25){\small{point}}
\rput(6.05,0.78){\small{root}}

\rput(2.5,.75){\small{the optimal}}
\rput(3.75,2.25){\rotateleft{\small{solution for MRSFA}}}

\rput(2.5,3){\large$\cdots$}
\rput(2,2.65){\large$\vdots$}

}
\endpspicture
\pspicture(4,4)
\rput(-0.75,4.15){\small{(b)}}
\scalebox{0.8}{

\psline[linewidth=0.25pt,linecolor=lightgray](0,0.5)(5,0.5)
\psline[linewidth=0.25pt,linecolor=lightgray](0,1.5)(5,1.5)
\psline[linewidth=0.25pt,linecolor=lightgray](0,3.5)(5,3.5)
\psline[linewidth=0.25pt,linecolor=lightgray](0,4.5)(5,4.5)

\psline[linewidth=0.25pt,linecolor=lightgray](0.5,0)(0.5,5)
\psline[linewidth=0.25pt,linecolor=lightgray](1.5,0)(1.5,5)
\psline[linewidth=0.25pt,linecolor=lightgray](3.5,0)(3.5,5)
\psline[linewidth=0.25pt,linecolor=lightgray](4.5,0)(4.5,5)

\psline[linewidth=0.5pt,linestyle=dotted,dotsep=1pt](0,0)(5,0)(5,5)(0,5)(0,0)


\cnode[linewidth=0.5pt,fillstyle=solid,fillcolor=lightgray](0.5,0.5){2.5pt}{v}
\cnode[linewidth=0.5pt,fillstyle=solid,fillcolor=white](3.5,3.5){4.25pt}{v}

\cnode[linewidth=0.5pt,fillstyle=solid,linecolor=lightgray,fillcolor=white](4,4.25){2.5pt}{v}
\cnode[linewidth=0.5pt,fillstyle=solid,linecolor=lightgray,fillcolor=white](4.25,4){2.5pt}{v}
\cnode*[linewidth=0.5pt,fillstyle=solid,linecolor=lightgray,fillcolor=white](4.25,4.25){2.5pt}{v}

\cnode[linewidth=0.5pt,fillstyle=solid,fillcolor=lightgray](3.5,3.5){2.5pt}{v}

\rput(2.5,3){\large$\cdots$}
\rput(2,2.65){\large$\vdots$}

\psline[linewidth=0.5pt,linecolor=black,arrowsize=3pt 2,linecolor=gray]{->}(3.9,4.15)(3.6,3.7)
\psline[linewidth=0.5pt,linecolor=black,arrowsize=3pt 2,linecolor=gray]{->}(4.15,3.9)(3.7,3.6)
\psline[linewidth=0.5pt,linecolor=black,arrowsize=3pt 2,linecolor=gray]{->}(4.15,4.15)(3.8,3.8)

}
\endpspicture

\caption{Rounding may result in the perturbated instance whose optimal solution for the RSFA problem cannot be transformed to the sub-optimal one for the original input, by a small relative cost increase. (b) In the perturbated instance of (a), the non-root (perturbated) points coincides with the (perturbated) root, so the cost of optimal solution for the perturbated instance is $0$.}\label{fig:not-Arrora}
\end{center}
\end{figure}

We follow most of the notation of~\cite{M-1997,M-1998,M-1999}. Let $G=(V(G),E(G))$ be a rectilinear graph, inducing a rectilinear polygonal subdivision of the plane. Without loss of generality, assume that $G$ is restricted to the unit square $B$, i.e.\ ${\rm Un}(G) \subseteq {\rm int}(B)$. (Then, each facet ($2$-face) of $G$ is a bonded rectilinear polygon, possibly with holes.) Let $W$ be a closed axis-aligned rectangle such that $W  \subseteq B$ (we shall refer to $W$ as a {\sl window}), and let $l$ be a axis-aligned line such that $l \cap {\rm int}(W) \neq \emptyset$ (we shall refer to $l$ as a {\sl cut for $G$ with respect to $W$}). The intersection $l \cap {\rm int}(W) \cap {\rm Un}(G)$ consists of a (finite) number of subsegments of $l$. (Some of these ``segments'' may degenerate to a single point when $l$ crosses an edge of $G$.) The endpoints of these subsegments are called the {\sl endpoints along~$l$} (with respect to $W$); we emphasize that the two points where $l$ crosses the boundary of $W$ are not considered to be endpoints along $l$.

\begin{figure}[htb]
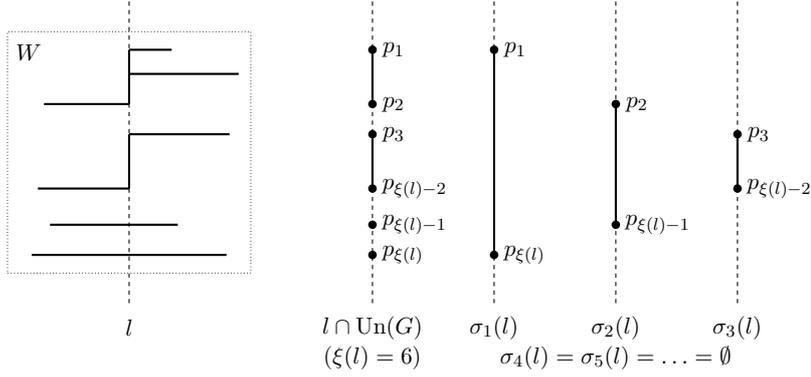

\begin{center}
\pspicture(3.2,3.5)
\scalebox{0.8}{

\psline[linewidth=0.5pt,linestyle=dotted,dotsep=1pt](0,0)(4,0)(4,4)(0,4)(0,0)

\psline[linewidth=0.5pt,linecolor=black,linestyle=dashed, dash=2pt 2pt](2,-0.5)(2,4.5)

\psline[linewidth=1pt,linecolor=black](0.4,0.3)(3.6,0.3)
\psline[linewidth=1pt,linecolor=black](0.7,0.8)(2.8,0.8)
\psline[linewidth=1pt,linecolor=black](0.5,1.4)(2,1.4)
\psline[linewidth=1pt,linecolor=black](2,2.3)(3.65,2.3)
\psline[linewidth=1pt,linecolor=black](0.6,2.8)(2,2.8)
\psline[linewidth=1pt,linecolor=black](2,3.3)(3.8,3.3)
\psline[linewidth=1pt,linecolor=black](2.7,3.7)(2,3.7)

\psline[linewidth=1pt,linecolor=black](2,2.3)(2,1.4)
\psline[linewidth=1pt,linecolor=black](2,3.7)(2,2.8)

\rput(0.35,3.65){$W$}
\rput(2,-0.9){$l$}

}
\endpspicture
\pspicture(1.6,4)
\scalebox{0.8}{

\psline[linewidth=0.5pt,linecolor=black,linestyle=dashed, dash=2pt 2pt](2,-0.5)(2,4.5)

\psline[linewidth=1pt,linecolor=black](2,2.3)(2,1.4)
\psline[linewidth=1pt,linecolor=black](2,3.7)(2,2.8)

\cnode*[linewidth=0.5pt,fillstyle=solid,fillcolor=white](2,0.3){2pt}{v}
\cnode*[linewidth=0.5pt,fillstyle=solid,fillcolor=white](2,.8){2pt}{v}
\cnode*[linewidth=0.5pt,fillstyle=solid,fillcolor=white](2,1.4){2pt}{v}
\cnode*[linewidth=0.5pt,fillstyle=solid,fillcolor=white](2,2.3){2pt}{v}
\cnode*[linewidth=0.5pt,fillstyle=solid,fillcolor=white](2,3.7){2pt}{v}
\cnode*[linewidth=0.5pt,fillstyle=solid,fillcolor=white](2,2.8){2pt}{v}

\rput(2.35,3.7){$p_1$}
\rput(2.35,2.8){$p_2$}
\rput(2.35,2.3){$p_3$}
\rput(2.69,1.4){$p_{\xi(l)-2}$}
\rput(2.69,0.8){$p_{\xi(l)-1}$}
\rput(2.5,0.3){$p_{\xi(l)}$}

\rput(2,-0.9){$l \cap {\rm Un}(G)$}
\rput(2,-1.4){($\xi(l)=6$)}

}
\endpspicture
\pspicture(1.6,4)
\scalebox{0.8}{

\psline[linewidth=0.5pt,linecolor=black,linestyle=dashed, dash=2pt 2pt](2,-0.5)(2,4.5)

\psline[linewidth=1pt,linecolor=black](2,3.7)(2,0.3)

\cnode*[linewidth=0.5pt,fillstyle=solid,fillcolor=white](2,0.3){2pt}{v}
\cnode*[linewidth=0.5pt,fillstyle=solid,fillcolor=white](2,3.7){2pt}{v}

\rput(2.35,3.7){$p_1$}
\rput(2.5,0.3){$p_{\xi(l)}$}

\rput(2,-0.9){$\sigma_1(l)$}
\rput(4,-1.4){$\sigma_4(l)=\sigma_5(l)=\ldots=\emptyset$}

}
\endpspicture
\pspicture(1.6,4)
\scalebox{0.8}{

\psline[linewidth=0.5pt,linecolor=black,linestyle=dashed, dash=2pt 2pt](2,-0.5)(2,4.5)

\psline[linewidth=1pt,linecolor=black](2,2.8)(2,0.8)

\cnode*[linewidth=0.5pt,fillstyle=solid,fillcolor=white](2,.8){2pt}{v}
\cnode*[linewidth=0.5pt,fillstyle=solid,fillcolor=white](2,2.8){2pt}{v}

\rput(2.35,2.8){$p_2$}
\rput(2.69,0.8){$p_{\xi(l)-1}$}

\rput(2,-0.9){$\sigma_2(l)$}

}
\endpspicture
\pspicture(1.6,4)
\scalebox{0.8}{

\psline[linewidth=0.5pt,linecolor=black,linestyle=dashed, dash=2pt 2pt](2,-0.5)(2,4.5)

\psline[linewidth=1pt,linecolor=black](2,2.3)(2,1.4)

\cnode*[linewidth=0.5pt,fillstyle=solid,fillcolor=white](2,1.4){2pt}{v}
\cnode*[linewidth=0.5pt,fillstyle=solid,fillcolor=white](2,2.3){2pt}{v}

\rput(2.35,2.3){$p_3$}
\rput(2.69,1.4){$p_{\xi(l)-2}$}

\rput(2,-0.9){$\sigma_3(l)$}

}
\endpspicture

\vspace{1.2cm}
\caption{Definition of the $k$-span.}\label{fig:k-span}
\end{center}
\end{figure}

Let $\xi(l)$ be the number of endpoints along $l$ and let $p_1,\ldots,p_{\xi(l)}$ be the endpoints along $l$, ordered downward. For a positive integer $k$, we define the {\sl $k$-span}, denoted by $\sigma_k(l)$, of $l$ with respect to $W$ as follows (see Fig.~\ref{fig:k-span} for an illustration). If $\xi(l) \le 2(m-1)$, then $\sigma_k(l) \doteq \emptyset$; otherwise, $\sigma_k(l)$ is defined to be the (possibly zero-length) line segment $p_kp_{\xi(l)-k+1}$. Now, we say that $l$ is an {\sl $k$-perfect cut with respect to $W$} if $\sigma_k(l) \subseteq {\rm Un}(G)$, see Fig.~\ref{fig:k-perfect}. Notice that if $\xi(l) \le 2(m-1)$, then $l$ is trivially a $k$-perfect cut as $\sigma_k(l)=\emptyset$, and if $\xi(l) = 2m-1$, then $l$ is $k$-perfect as $\sigma_k(l)$ is a single point.

\begin{figure}
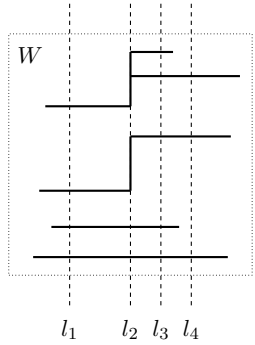

\begin{center}
\pspicture(3.2,3.5)
\scalebox{0.8}{

\psline[linewidth=0.5pt,linestyle=dotted,dotsep=1pt](0,0)(4,0)(4,4)(0,4)(0,0)

\psline[linewidth=0.5pt,linecolor=black,linestyle=dashed, dash=2pt 2pt](2,-0.5)(2,4.5)
\psline[linewidth=0.5pt,linecolor=black,linestyle=dashed, dash=2pt 2pt](1,-0.5)(1,4.5)
\psline[linewidth=0.5pt,linecolor=black,linestyle=dashed, dash=2pt 2pt](2.5,-0.5)(2.5,4.5)
\psline[linewidth=0.5pt,linecolor=black,linestyle=dashed, dash=2pt 2pt](3,-0.5)(3,4.5)

\psline[linewidth=1pt,linecolor=black](0.4,0.3)(3.6,0.3)
\psline[linewidth=1pt,linecolor=black](0.7,0.8)(2.8,0.8)
\psline[linewidth=1pt,linecolor=black](0.5,1.4)(2,1.4)
\psline[linewidth=1pt,linecolor=black](2,2.3)(3.65,2.3)
\psline[linewidth=1pt,linecolor=black](0.6,2.8)(2,2.8)
\psline[linewidth=1pt,linecolor=black](2,3.3)(3.8,3.3)
\psline[linewidth=1pt,linecolor=black](2.7,3.7)(2,3.7)

\psline[linewidth=1pt,linecolor=black](2,2.3)(2,1.4)
\psline[linewidth=1pt,linecolor=black](2,3.7)(2,2.8)

\rput(0.35,3.65){$W$}
\rput(2,-0.9){$l_2$}
\rput(1,-0.9){$l_1$}
\rput(2.5,-0.9){$l_3$}
\rput(3,-0.9){$l_4$}

}
\endpspicture

\vspace{0.8cm}
\caption{The vertical cuts $l_1, l_2, l_3,l_4$ are $3$-perfect (also $k$ perfect, $k \ge 4$).\\ The cut $l_4$
is also $2$-perfect but not $1$-perfect.}\label{fig:k-perfect}
\end{center}
\end{figure}

Following the above definitions, we say that $G$ is a {\sl $k$-guillotine subdivision with respect to a window $W$} if either (1) $V(G) \cap {\rm int}(W) = \emptyset$ or (2) there exists a $k$-perfect cut $l$ with respect to $W$ such that $G$ is a $k$-guillotine subdivision with respect to windows $W \cap H^+$ and $W \cap H^-$, where $H^+$ and $H^-$ are the closed half-planes induced by the cut $l$. Finally, we say that $G$ is a {\sl $k$-guillotine subdivision} if $G$ is a $k$-guillotine subdivision with respect to the unit square $B$.

\begin{theorem}\label{thm:Mitchell}{\em\cite{M-1997}}
Let $G$ be a rectilinear graph. Then, for any integer $k \ge 1$, there exists a~$k$-guillotine subdivision $G_k$ such that $l(G_k) \le (1+\frac{1}{k})\, l(G)$ and ${\rm Un}(G) \subseteq {\rm Un}(G_k)$.
\end{theorem}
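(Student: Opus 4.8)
The plan is to build $G_k$ by a recursive sweep over nested windows that mirrors the very definition of a $k$-guillotine subdivision, keeping the total length added under control by a Mitchell-style charging argument. We process windows starting from $W=B$, maintaining a current rectilinear graph $G'\supseteq G$ with ${\rm Un}(G)\subseteq{\rm Un}(G')$. When the window $W$ under consideration satisfies $V(G')\cap{\rm int}(W)=\emptyset$, there is nothing to do: $G'$ is a $k$-guillotine subdivision with respect to $W$ by definition. Otherwise we must exhibit a $k$-perfect cut: we select an axis-parallel cut $l$ through ${\rm int}(W)$, add to $G'$ the portion of the $k$-span $\sigma_k(l)$ that is not yet on ${\rm Un}(G')$ (a direct check shows that this makes $l$ a $k$-perfect cut, the only care being the degenerate situations where $\sigma_k(l)$ shrinks to a point or where inserting $\sigma_k(l)$ merges away some endpoints along $l$), and recurse on the two sub-windows $W\cap H^+$ and $W\cap H^-$. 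Termination is clear, since each cut strictly decreases the number of vertices of $G'$ interior to the window; and since every step only adds edges, the containment ${\rm Un}(G)\subseteq{\rm Un}(G_k)$ is immediate. The whole content of the theorem is therefore to choose the cuts so that the total added length is at most $\frac{1}{k}\,l(G)$.

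The crux is a selection lemma: in each window $W$ requiring a cut, there is an axis-parallel candidate cut $l$ (it suffices to consider the lines through vertices of $G'$ in ${\rm int}(W)$) whose price --- the length of $\sigma_k(l)\setminus{\rm Un}(G')$ --- can be charged, at rate $1/k$, against edge length of $G'$ lying inside $W$. Here the threshold in the definition of the $k$-span (``$\xi(l)\le 2(k-1)$ forces $\sigma_k(l)=\emptyset$'') is exactly what is needed: whenever $\sigma_k(l)$ is a nondegenerate segment, its endpoints $p_k$ and $p_{\xi(l)-k+1}$ are separated from the extreme endpoints $p_1$ and $p_{\xi(l)}$ along $l$ by at least $k-1$ further endpoints on each side, i.e.\ $\sigma_k(l)$ is ``blocked'' from the two ends of $l\cap{\rm int}(W)$ by at least $k$ edges of $G'$ transverse to $l$ on either side, and these $\Theta(k)$ transverse edges are what absorb its price. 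An averaging/exchange step --- comparing the cheapest vertical with the cheapest horizontal candidate --- then yields a cut whose price is at most $\frac{1}{k}$ of the relevant length of $G'$ in $W$. Because each added span $\sigma_k(l)$ is parallel to $l$, one arranges that a vertical span is charged only against horizontal edge length and a horizontal span only against vertical edge length; and because the windows produced by a guillotine recursion are nested and, at each level, interior-disjoint, the charges ever collected against any fixed piece of horizontal (resp.\ vertical) edge length telescope to at most $\frac{1}{k}$ of it. Summing, the total length added is at most $\frac{1}{k}\bigl(l_{\rm hor}(G)+l_{\rm ver}(G)\bigr)=\frac{1}{k}\,l(G)$, where $l_{\rm hor}$ and $l_{\rm ver}$ denote total horizontal and vertical edge length.

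I expect the bookkeeping of this charging scheme to be the main obstacle. One must define the price of a cut, the precise set of edges it is charged to, and the potential ``length of $G'$ inside $W$'' so that three things hold simultaneously: (i) some candidate cut is cheap enough, which is the averaging step; (ii) the span actually gets paid for by edges that are blocked from the window boundary by at least $k$ parallel edges; and (iii) the recursion tree never reuses an edge's budget, so that the charges sum to a genuine $1/k$ factor rather than growing with the recursion depth. Alongside this, a handful of degenerate situations --- spans that shrink to a single point (when $\xi(l)=2k-1$), cuts passing through vertices of $G'$, portions of $\sigma_k(l)$ already lying on ${\rm Un}(G')$, and the interaction of a freshly added segment with cuts made later inside the sub-windows --- must be dispatched carefully so as to preserve the invariant that the graph produced on each window is $k$-guillotine with respect to that window. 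Once the lemma is in place, iterating it over the recursion and taking $G_k$ to be the final graph gives the statement.
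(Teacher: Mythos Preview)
The paper does not give its own proof of this statement: it is quoted from Mitchell~\cite{M-1997} and used as a black box in the PTAS. Your sketch is an accurate high-level account of Mitchell's argument --- recursive windowing, insertion of $\sigma_k(l)$ to force $k$-perfectness, and charging the added span length at rate $1/k$ against perpendicular edge length --- so at the level of approach there is nothing to compare.

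Where your sketch is not yet a proof is exactly your item~(iii), and the gap is real. The sentence ``because the windows produced by a guillotine recursion are nested and, at each level, interior-disjoint, the charges ever collected against any fixed piece of horizontal (resp.\ vertical) edge length telescope to at most $\frac{1}{k}$ of it'' does not follow from what precedes it. Interior-disjointness of siblings controls double-charging \emph{within} a level, but a fixed edge of $G$ lies in every window along a root-to-leaf path of the recursion tree, and nothing in your averaging step prevents it from absorbing a fresh $1/k$-charge at each of those windows; summed over the recursion depth this blows up. Mitchell's actual mechanism is not a telescoping over nested windows: at each window he proves the existence of a \emph{favorable} cut, one whose $k$-span lies inside the ``$k$-dark'' region of the cut line (points flanked by at least $k$ perpendicular edges of $G$ on each side within $W$), and the charge goes specifically to perpendicular length witnessing that darkness. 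The point is that after inserting $\sigma_k(l)$ and splitting $W$, the witnessing length sits within $k$ endpoints of the new boundary and can no longer be $k$-dark in either sub-window, so it is never charged again. This positional no-reuse argument is what replaces your telescoping; with it in place, the rest of your outline goes through. For the precise formulation you should consult~\cite{M-1997,M-1999}.
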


Let $F_{\rm OPT}$ be a minimum rectilinear Steiner forest arborescence  for an $n$-point set $P$ with an $m$-root set $R$, both lying in the first quadrant of $\mathbb{R}^2$ and such that $(0,0) \in R$. By the above theorem,  $F_{\rm OPT}$ can be converted to a~$k$-guillotine subdivision $F_k$ without increasing its length by much and such that ${\rm Un}(G) \subseteq {\rm Un}(G_k)$. In particular, in the resulting subdivision $F_k$, for each point $p \in P$, there exists a root $r \in R$ and a path $\pi$ in $F_k$ such that the length of $\pi$ is equal to $({\rm x}(p)-{\rm x}(r))+({\rm y}(p)-{\rm y}(r))$.
On the other hand, if $G$ is a $k$-guillotine (rectilinear) subdivision such that for each point $p \in P$, there exists a root $r \in R$ and a path $\pi$ in $G$ such that the length of $\pi$ is equal to $({\rm x}(p)-{\rm x}(r))+({\rm y}(p)-{\rm y}(r))$, then $G$ can be converted into a rectilinear Steiner forest arborescence $F_G$ for $P \cup R$ of length $l(F_G) \le l(G)$, in a simple greedy manner, be deleting some edges. Moreover, by minimality of $l(G)$, we have $l(F_G) \le l(F_k)$. Therefore, using the recursive structure of a~$k$-guillotine subdivisions, the idea is to find a minimum-length $k$-guillotine (rectilinear) subdivision $G$ such that for each point $p \in P$, there exists a root $r \in R$ and a path $\pi$ in $G$ such that the length of $\pi$ is equal to $({\rm x}(p)-{\rm x}(r))+({\rm y}(p)-{\rm y}(r))$. Note that an optimal solution will be necessarily a forest, since any cycle is formed can be broken without violating the connectivity requirements
in the objective function (defined below).

\paragraph{A minimum-length subdivision.} We assume, without loss of generality, that no two points of the input sets $P$ and $R$ lie on a common vertical or horizontal line. Otherwise, we slightly perturbate points in $P \cup R$, but in such a way that for a given root $r \in R$, perturbating does not change the subset of points in $P$ that are covered by $r$.

Let $x_1 < x_2 < \ldots < x_{2(n+m)-1}$ (resp.\  $y_1 < y_2 < \ldots < y_{2(n+m)-1}$) denote the sorted $x$- (resp.\ $y$-) coordinates of the $n+m$ points in $P \cup R$ as well as the $n+m-1$ midpoints of the intervals determined by these coordinates. Recall that by~\cite{M-1999}, we may discretise $k$-perfect cuts, that is, we may restrict the $x$-coordinate of a vertical $k$-perfect cut (resp.\ the $y$-coordinate of a horizontal $k$-perfect cut) to be in $\{x_1,\ldots, x_{2(n+m)-1}\}$ (resp.\ in $\{y_1,\ldots, y_{2(n+m)-1}\}$). Observe that by the construction, in particular, by the definition of the $k$-span, the~$k$-guillotine subdivision $G_k$ obtained in Theorem~3.\ref{thm:Mitchell} has the following property: {\sl if $p \in {\rm Un}(G_k)$, then there exists a root $r \in R$ and a path $\pi$ in $G_k$ such that the length of $\pi$ is equal to $({\rm x}(p)-{\rm x}(r))+({\rm y}(p)-{\rm y}(r))$}. This fact is crucial since it allows us to restrict ourselves only to $k$-guillotine subdivisions that satisfy this property, without loosing the estimate on the approximation ratio. Specifically, our dynamic programming is based on solving the following subproblems, see Fig.~\ref{fig:subproblem}. 
\\[5mm]
\hrule
\vspace{3mm}
\noindent{\textbf{\small SUBPROBLEM}}

\vspace{3mm}
\hrule

\vspace{3mm}
\noindent{\bfseries{Instance.}}
\begin{itemize}
\item[(1)] An axis-aligned window $W=W(l,r,b,t)$ determined by two points $(x_l,y_b)$ and $(x_r,y_t)$, $x_l <x_r$ and $y_b < y_t$, with $e_r,e_t,e_l$ and $e_b$ as its right, top, left and bottom edge, respectively.
\item[(2)] For each edge $e$ of $W$, a set $X_e$ of at most $2k+2$ points of $e$, among them at most $2k$ distinct internal points of $e$ and at most two points corresponding to the endpoints of $e$.
\item[(3)] For each set $X_e$, a set $S_e$ of disjoint line segments (vertical or horizontal), called {\sl gates}, with endpoints in $S_e$ such that if a point $p \in P$ lies on the boundary of $W$, then $p$ belongs to some segment in $S_e$; 
in addition, if $X_e$ has $2k$ distinct internal points of $e$, then the segment formed by the middle two of them belongs to $S_e$.
\end{itemize}

\noindent{\bfseries{Objective.}}  Let  $P_W=P \cap {\rm int}(W)$, let $P_r$ be the set of the bottom endpoints of segments in $S_{e_r}$, including the degenerate segments in $S_{e_r}$, and let $P_t$ be the set of the left endpoints of segments in $S_{e_t}$, including the degenerate segments in $S_{e_t}$. Set $I_W \doteq (P_W \cup P_r \cup P_t) \setminus R$. The objective is to compute a minimum-length guillotine rectilinear subdivision $G=(V_G,E_G)$ such that:

\begin{itemize}
\item[(1)] Edges in $E_G$ are axis-aligned line segments not lying on the boundary of $W$.
\item[(2)] For each point $p \in I_W$, there is a path $\pi$ in ${\rm Un}(G)$ to either a root $r \in R \cap {\rm int}(W)$ or to a point $q$ of a segment in $X_{e_l} \cup X_{e_b}$ such that the length of $\pi$ is equal to $({\rm x}(p)-{\rm x}(u))+({\rm y}(p)-{\rm y}(u))$, for some $u \in \{r,s\}$.
\end{itemize}
 \hrule

\bigskip
As regards the condition (1) in the objective function --- by arguments similar to those in the proof of Lemma~1.\ref{lem:preRSA_in_H}, so we omit details --- observe that, without loss of generality, we may assume that the edges in $E_G$ lie also within the Hanan grid induced by all points in $((P \cup R) \cap W) \cup \bigcup_{e \in E(W)}X_e$, where $E(W)=\{e_r,e_t,e_l,e_b\}$.

\begin{figure}
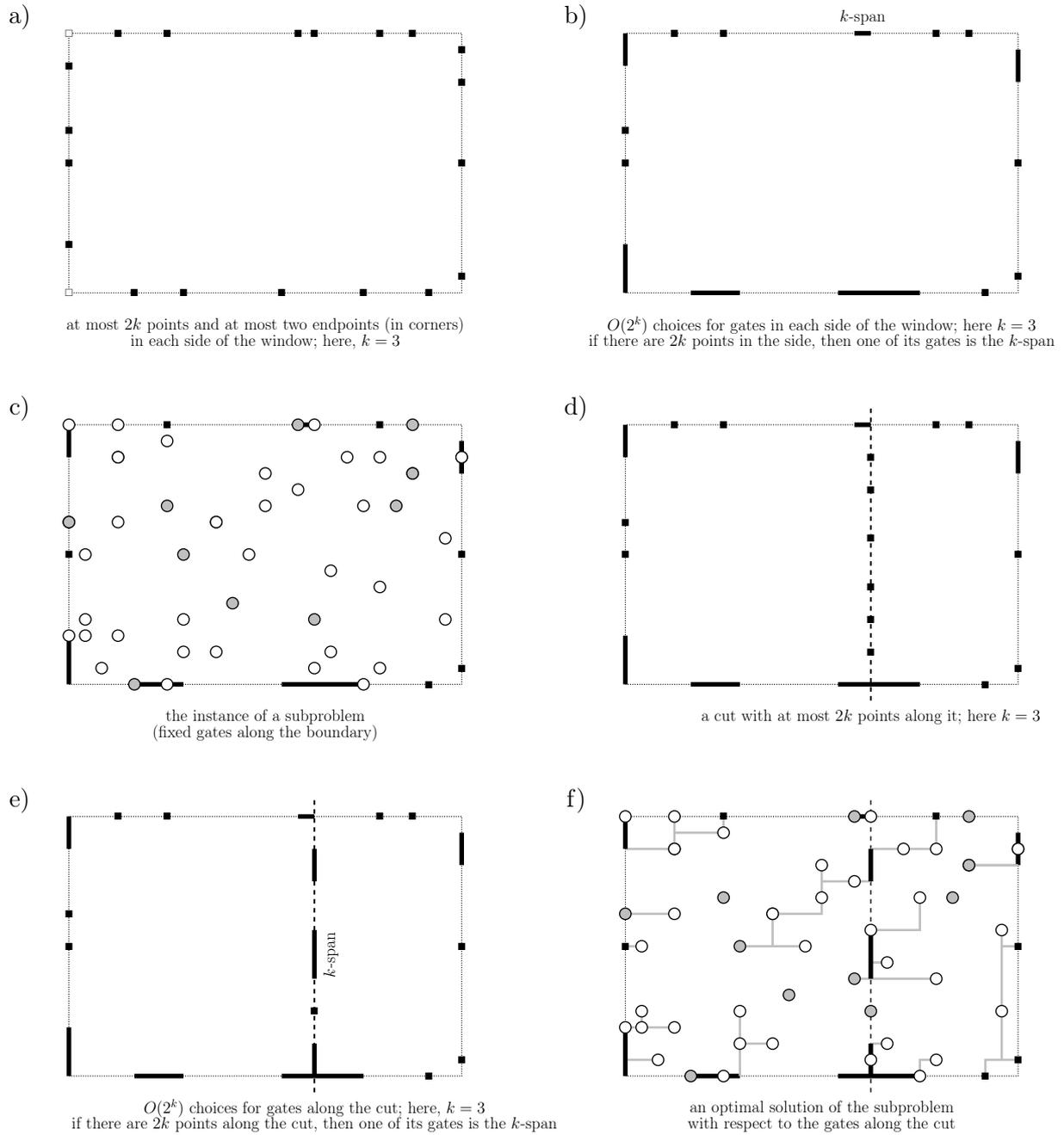

\begin{center}
\pspicture(8.5,4.5)
\scalebox{0.5}{

\psline[linewidth=1pt,linestyle=dotted,dotsep=1pt](0,0)(12,0)(12,8)(0,8)(0,0)


\rput(2,0){\footnotesize$\blacksquare$}
\rput(3.5,0){\footnotesize$\blacksquare$}
\rput(6.5,0){\footnotesize$\blacksquare$}
\rput(9,0){\footnotesize$\blacksquare$}
\rput(0,0){\footnotesize$\white\blacksquare$}
\rput(0,0){\footnotesize$\square$}
\rput(0,1.5){\footnotesize$\blacksquare$}
\rput(0,7){\footnotesize$\blacksquare$}
\rput(0,8){\footnotesize$\white\blacksquare$}
\rput(0,8){\footnotesize$\square$}
\rput(12,6.5){\footnotesize$\blacksquare$}
\rput(12,7.5){\footnotesize$\blacksquare$}
\rput(7,8){\footnotesize$\blacksquare$}
\rput(7.5,8){\footnotesize$\blacksquare$}

\rput(11,0){\footnotesize$\blacksquare$}
\rput(12,0.5){\footnotesize$\blacksquare$}
\rput(12,4){\footnotesize$\blacksquare$}
\rput(10.5,8){\footnotesize$\blacksquare$}
\rput(9.5,8){\footnotesize$\blacksquare$}
\rput(3,8){\footnotesize$\blacksquare$}
\rput(1.5,8){\footnotesize$\blacksquare$}
\rput(0,5){\footnotesize$\blacksquare$}
\rput(0,4){\footnotesize$\blacksquare$}

\rput(6,-1){\Large at most $2k$ points and at most two endpoints (in corners)}
\rput(6,-1.5){\Large in each side of the window; here, $k=3$}

}
\rput(-0.75,4.25){a)}
\endpspicture
\pspicture(6,4.5)
\scalebox{0.5}{

\psline[linewidth=1pt,linestyle=dotted,dotsep=1pt](0,0)(12,0)(12,8)(0,8)(0,0)

\psline[linewidth=4pt,linecolor=black](2,0)(3.5,0)
\psline[linewidth=4pt,linecolor=black](6.5,0)(9,0)

\psline[linewidth=4pt,linecolor=black](0,0)(0,1.5)

\psline[linewidth=4pt,linecolor=black](0,7)(0,8)

\psline[linewidth=4pt,linecolor=black](12,6.5)(12,7.5)
\psline[linewidth=4pt,linecolor=black](7,8)(7.5,8)

\rput(11,0){\footnotesize$\blacksquare$}
\rput(12,0.5){\footnotesize$\blacksquare$}
\rput(12,4){\footnotesize$\blacksquare$}
\rput(10.5,8){\footnotesize$\blacksquare$}
\rput(9.5,8){\footnotesize$\blacksquare$}
\rput(3,8){\footnotesize$\blacksquare$}
\rput(1.5,8){\footnotesize$\blacksquare$}
\rput(0,5){\footnotesize$\blacksquare$}
\rput(0,4){\footnotesize$\blacksquare$}

\rput(6,-1){\Large $O(2^k)$ choices for gates in each side of the window; here $k=3$}
\rput(6,-1.5){\Large if there are $2k$ points in the side, then one of its gates is the $k$-span}
\rput(7.25,8.5){\Large $k$-span}

}
\rput(-0.75,4.25){b)}
\endpspicture

\pspicture(8.5,6)
\scalebox{0.5}{

\psline[linewidth=1pt,linestyle=dotted,dotsep=1pt](0,0)(12,0)(12,8)(0,8)(0,0)

\psline[linewidth=4pt,linecolor=black](2,0)(3.5,0)
\cnode[linewidth=1pt,fillstyle=solid,linecolor=black,fillcolor=lightgray](2,0){5pt}{v}
\psline[linewidth=4pt,linecolor=black](6.5,0)(9,0)

\psline[linewidth=4pt,linecolor=black](0,0)(0,1.5)

\cnode[linewidth=1pt,fillstyle=solid,fillcolor=white](0,1.5){5pt}{v}
\cnode[linewidth=1pt,fillstyle=solid,fillcolor=white](0.5,1.5){5pt}{v}
\cnode[linewidth=1pt,fillstyle=solid,fillcolor=white](0.5,2){5pt}{v}
\cnode[linewidth=1pt,fillstyle=solid,fillcolor=white](1.5,1.5){5pt}{v}
\cnode[linewidth=1pt,fillstyle=solid,fillcolor=white](1,.5){5pt}{v}
\cnode[linewidth=1pt,fillstyle=solid,fillcolor=white,linecolor=white](1.5,8){5pt}{v}

\cnode[linewidth=1pt,fillstyle=solid,fillcolor=white](3,0){5pt}{v}
\cnode[linewidth=1pt,fillstyle=solid,fillcolor=white](3.5,1){5pt}{v}
\cnode[linewidth=1pt,fillstyle=solid,fillcolor=white](3.5,2){5pt}{v}
\cnode[linewidth=1pt,fillstyle=solid,fillcolor=white](4.5,1){5pt}{v}

\cnode[linewidth=1pt,fillstyle=solid,fillcolor=white](7.5,.5){5pt}{v}
\cnode[linewidth=1pt,fillstyle=solid,fillcolor=white](8,1){5pt}{v}
\cnode[linewidth=1pt,fillstyle=solid,fillcolor=white](9,0){5pt}{v}
\cnode[linewidth=1pt,fillstyle=solid,fillcolor=white](9.5,.5){5pt}{v}

\cnode[linewidth=1pt,fillstyle=solid,fillcolor=white](11.5,4.5){5pt}{v}

\cnode[linewidth=1pt,fillstyle=solid,fillcolor=white](0.5,4){5pt}{v}

\cnode[linewidth=1pt,fillstyle=solid,linecolor=black,fillcolor=lightgray](0,5){5pt}{v}
\cnode[linewidth=1pt,fillstyle=solid,linecolor=black,fillcolor=lightgray](0,5){5pt}{v}
\cnode[linewidth=1pt,fillstyle=solid,fillcolor=white](1.5,5){5pt}{v}

\psline[linewidth=4pt,linecolor=black](0,7)(0,8)

\cnode[linewidth=1pt,fillstyle=solid,fillcolor=white](0,8){5pt}{v}
\cnode[linewidth=1pt,fillstyle=solid,fillcolor=white](1.5,7){5pt}{v}
\cnode[linewidth=1pt,fillstyle=solid,fillcolor=white](1.5,8){5pt}{v}
\cnode[linewidth=1pt,fillstyle=solid,fillcolor=white](3,7.5){5pt}{v}
\cnode[linewidth=1pt,fillstyle=solid,fillcolor=white](1.5,7){5pt}{v}

\cnode[linewidth=1pt,fillstyle=solid,linecolor=black,fillcolor=lightgray](3.5,4){5pt}{v}
\cnode[linewidth=1pt,fillstyle=solid,fillcolor=white](5.5,4){5pt}{v}
\cnode[linewidth=1pt,fillstyle=solid,fillcolor=white](4.5,5){5pt}{v}

\cnode[linewidth=1pt,fillstyle=solid,fillcolor=white](8,3.5){5pt}{v}
\cnode[linewidth=1pt,fillstyle=solid,fillcolor=white](9,5.5){5pt}{v}
\cnode[linewidth=1pt,fillstyle=solid,fillcolor=white](9.5,3){5pt}{v}

\psline[linewidth=4pt,linecolor=black](12,6.5)(12,7.5)
\cnode[linewidth=1pt,fillstyle=solid,linecolor=black,fillcolor=lightgray](10.5,6.5){5pt}{v}
\cnode[linewidth=1pt,fillstyle=solid,fillcolor=white](12,7){5pt}{v}

\cnode[linewidth=1pt,fillstyle=solid,linecolor=black,fillcolor=lightgray](10.5,8){5pt}{v}

\psline[linewidth=4pt,linecolor=black](12,6.5)(12,7.5)
\cnode[linewidth=1pt,fillstyle=solid,linecolor=black,fillcolor=lightgray](10.5,6.5){5pt}{v}
\cnode[linewidth=1pt,fillstyle=solid,fillcolor=white](12,7){5pt}{v}
\cnode[linewidth=1pt,fillstyle=solid,fillcolor=white](8.5,7){5pt}{v}
\cnode[linewidth=1pt,fillstyle=solid,fillcolor=white](9.5,7){5pt}{v}
\cnode[linewidth=1pt,fillstyle=solid,fillcolor=white](7,6){5pt}{v}
\cnode[linewidth=1pt,fillstyle=solid,fillcolor=white](6,5.5){5pt}{v}
\cnode[linewidth=1pt,fillstyle=solid,fillcolor=white](6,6.5){5pt}{v}
\cnode[linewidth=1pt,fillstyle=solid,fillcolor=white](4.5,5){5pt}{v}

\cnode[linewidth=1pt,fillstyle=solid,fillcolor=white](11.5,2){5pt}{v}
\cnode[linewidth=1pt,fillstyle=solid,linecolor=black,fillcolor=lightgray](7.5,2){5pt}{v}

\cnode[linewidth=1pt,fillstyle=solid,linecolor=black,fillcolor=lightgray](5,2.5){5pt}{v}
\cnode[linewidth=1pt,fillstyle=solid,linecolor=black,fillcolor=lightgray](3,5.5){5pt}{v}
\cnode[linewidth=1pt,fillstyle=solid,linecolor=black,fillcolor=lightgray](10,5.5){5pt}{v}

\psline[linewidth=4pt,linecolor=black](7,8)(7.5,8)
\cnode[linewidth=1pt,fillstyle=solid,linecolor=black,fillcolor=lightgray](7,8){5pt}{v}
\cnode[linewidth=1pt,fillstyle=solid,fillcolor=white](7.5,8){5pt}{v}

\rput(11,0){\footnotesize$\blacksquare$}
\rput(12,0.5){\footnotesize$\blacksquare$}
\rput(12,4){\footnotesize$\blacksquare$}
\rput(9.5,8){\footnotesize$\blacksquare$}
\rput(3,8){\footnotesize$\blacksquare$}
\rput(0,4){\footnotesize$\blacksquare$}

\rput(6,-1){\Large the instance of a subproblem}
\rput(6,-1.5){\Large (fixed gates along the boundary)}

}
\rput(-0.75,4.25){c)}
\endpspicture
\pspicture(6,6)
\scalebox{0.5}{

\psline[linewidth=1pt,linestyle=dotted,dotsep=1pt](0,0)(12,0)(12,8)(0,8)(0,0)

\psline[linewidth=1.5pt,linecolor=black,linestyle=dashed, dash=4pt 4pt](7.5,-0.5)(7.5,8.5)

\psline[linewidth=4pt,linecolor=black](2,0)(3.5,0)
\psline[linewidth=4pt,linecolor=black](6.5,0)(9,0)

\psline[linewidth=4pt,linecolor=black](0,0)(0,1.5)

\psline[linewidth=4pt,linecolor=black](0,7)(0,8)

\psline[linewidth=4pt,linecolor=black](12,6.5)(12,7.5)
\psline[linewidth=4pt,linecolor=black](7,8)(7.5,8)

\rput(11,0){\footnotesize$\blacksquare$}
\rput(12,0.5){\footnotesize$\blacksquare$}
\rput(12,4){\footnotesize$\blacksquare$}
\rput(10.5,8){\footnotesize$\blacksquare$}
\rput(9.5,8){\footnotesize$\blacksquare$}
\rput(3,8){\footnotesize$\blacksquare$}
\rput(1.5,8){\footnotesize$\blacksquare$}
\rput(0,5){\footnotesize$\blacksquare$}
\rput(0,4){\footnotesize$\blacksquare$}
\rput(7.5,2){\footnotesize$\blacksquare$}
\rput(7.5,6){\footnotesize$\blacksquare$}
\rput(7.5,7){\footnotesize$\blacksquare$}
\rput(7.5,3){\footnotesize$\blacksquare$}
\rput(7.5,4.5){\footnotesize$\blacksquare$}
\rput(7.5,1){\footnotesize$\blacksquare$}

\rput(7.5,-1){\Large a cut with at most $2k$ points along it; here $k=3$}

}
\rput(-0.75,4.25){d)}
\endpspicture

\pspicture(8.5,6)
\scalebox{0.5}{

\psline[linewidth=1pt,linestyle=dotted,dotsep=1pt](0,0)(12,0)(12,8)(0,8)(0,0)

\psline[linewidth=1.5pt,linecolor=black,linestyle=dashed, dash=4pt 4pt](7.5,-0.5)(7.5,8.5)

\psline[linewidth=4pt,linecolor=black](2,0)(3.5,0)
\psline[linewidth=4pt,linecolor=black](6.5,0)(9,0)

\psline[linewidth=4pt,linecolor=black](0,0)(0,1.5)

\psline[linewidth=4pt,linecolor=black](0,7)(0,8)

\psline[linewidth=4pt,linecolor=black](12,6.5)(12,7.5)
\psline[linewidth=4pt,linecolor=black](7,8)(7.5,8)

\psline[linewidth=4pt,linecolor=black](7.5,0)(7.5,1)
\psline[linewidth=4pt,linecolor=black](7.5,3)(7.5,4.5)
\psline[linewidth=4pt,linecolor=black](7.5,6)(7.5,7)

\rput(11,0){\footnotesize$\blacksquare$}
\rput(12,0.5){\footnotesize$\blacksquare$}
\rput(12,4){\footnotesize$\blacksquare$}
\rput(10.5,8){\footnotesize$\blacksquare$}
\rput(9.5,8){\footnotesize$\blacksquare$}
\rput(3,8){\footnotesize$\blacksquare$}
\rput(1.5,8){\footnotesize$\blacksquare$}
\rput(0,5){\footnotesize$\blacksquare$}
\rput(0,4){\footnotesize$\blacksquare$}
\rput(7.5,2){\footnotesize$\blacksquare$}

\rput(7.5,-1){\Large $O(2^k)$ choices for gates along the cut; here, $k=3$}
\rput(7.5,-1.5){\Large if there are $2k$ points along the cut, then one of its gates is the $k$-span}

\rput(8,3.75){\rotateleft{\Large $k$-span}}

}
\rput(-0.75,4.25){e)}
\endpspicture
\pspicture(6,6)
\scalebox{0.5}{

\psline[linewidth=1pt,linestyle=dotted,dotsep=1pt](0,0)(12,0)(12,8)(0,8)(0,0)

\psline[linewidth=1pt,linecolor=black,linestyle=dashed, dash=4pt 4pt](7.5,-0.5)(7.5,8.5)

\psline[linewidth=4pt,linecolor=black](2,0)(3.5,0)
\cnode[linewidth=1pt,fillstyle=solid,linecolor=black,fillcolor=lightgray](2,0){5pt}{v}
\psline[linewidth=4pt,linecolor=black](6.5,0)(9,0)

\psline[linewidth=2pt,linecolor=lightgray](0,0.5)(1,0.5)
\psline[linewidth=2pt,linecolor=lightgray](0.5,1.5)(0.5,2)
\psline[linewidth=2pt,linecolor=lightgray](0,1.5)(1.5,1.5)
\psline[linewidth=4pt,linecolor=black](0,0)(0,1.5)

\cnode[linewidth=1pt,fillstyle=solid,fillcolor=white](0,1.5){5pt}{v}
\cnode[linewidth=1pt,fillstyle=solid,fillcolor=white](0.5,1.5){5pt}{v}
\cnode[linewidth=1pt,fillstyle=solid,fillcolor=white](0.5,2){5pt}{v}
\cnode[linewidth=1pt,fillstyle=solid,fillcolor=white](1.5,1.5){5pt}{v}
\cnode[linewidth=1pt,fillstyle=solid,fillcolor=white](1,.5){5pt}{v}
\cnode[linewidth=1pt,fillstyle=solid,fillcolor=white,linecolor=white](1.5,8){5pt}{v}

\psline[linewidth=2pt,linecolor=lightgray](3.5,0)(3.5,2)
\psline[linewidth=2pt,linecolor=lightgray](3.5,1)(4.5,1)

\cnode[linewidth=1pt,fillstyle=solid,fillcolor=white](3,0){5pt}{v}
\cnode[linewidth=1pt,fillstyle=solid,fillcolor=white](3.5,1){5pt}{v}
\cnode[linewidth=1pt,fillstyle=solid,fillcolor=white](3.5,2){5pt}{v}
\cnode[linewidth=1pt,fillstyle=solid,fillcolor=white](4.5,1){5pt}{v}

\psline[linewidth=2pt,linecolor=lightgray](7.5,1)(8,1)
\psline[linewidth=2pt,linecolor=lightgray](9,0)(9,0.5)(9.5,0.5)
\psline[linewidth=4pt,linecolor=black](7.5,0)(7.5,1)

\cnode[linewidth=1pt,fillstyle=solid,fillcolor=white](7.5,.5){5pt}{v}
\cnode[linewidth=1pt,fillstyle=solid,fillcolor=white](8,1){5pt}{v}
\cnode[linewidth=1pt,fillstyle=solid,fillcolor=white](9,0){5pt}{v}
\cnode[linewidth=1pt,fillstyle=solid,fillcolor=white](9.5,.5){5pt}{v}

\psline[linewidth=2pt,linecolor=lightgray](11,0)(11,0.5)(12,0.5)
\psline[linewidth=2pt,linecolor=lightgray](11.5,0.5)(11.5,4.5)
\psline[linewidth=2pt,linecolor=lightgray](11.5,4)(12,4)
\cnode[linewidth=1pt,fillstyle=solid,fillcolor=white](11.5,4.5){5pt}{v}

\psline[linewidth=2pt,linecolor=lightgray](0,4)(0.5,4)
\cnode[linewidth=1pt,fillstyle=solid,fillcolor=white](0.5,4){5pt}{v}

\psline[linewidth=2pt,linecolor=lightgray](0,5)(1.5,5)
\cnode[linewidth=1pt,fillstyle=solid,linecolor=black,fillcolor=lightgray](0,5){5pt}{v}
\cnode[linewidth=1pt,fillstyle=solid,fillcolor=white](1.5,5){5pt}{v}

\psline[linewidth=2pt,linecolor=lightgray](0,7)(1.5,7)(1.5,8)
\psline[linewidth=2pt,linecolor=lightgray](1.5,7.5)(3,7.5)(3,8)

\psline[linewidth=4pt,linecolor=black](0,7)(0,8)

\cnode[linewidth=1pt,fillstyle=solid,fillcolor=white](0,8){5pt}{v}
\cnode[linewidth=1pt,fillstyle=solid,fillcolor=white](1.5,7){5pt}{v}
\cnode[linewidth=1pt,fillstyle=solid,fillcolor=white](1.5,8){5pt}{v}
\cnode[linewidth=1pt,fillstyle=solid,fillcolor=white](3,7.5){5pt}{v}
\cnode[linewidth=1pt,fillstyle=solid,fillcolor=white](1.5,7){5pt}{v}

\psline[linewidth=2pt,linecolor=lightgray](3.5,4)(5.5,4)
\psline[linewidth=2pt,linecolor=lightgray](4.5,4)(4.5,5)
\cnode[linewidth=1pt,fillstyle=solid,linecolor=black,fillcolor=lightgray](3.5,4){5pt}{v}
\cnode[linewidth=1pt,fillstyle=solid,fillcolor=white](5.5,4){5pt}{v}
\cnode[linewidth=1pt,fillstyle=solid,fillcolor=white](4.5,5){5pt}{v}

\psline[linewidth=2pt,linecolor=lightgray](7,3)(9.5,3)
\psline[linewidth=2pt,linecolor=lightgray](7.5,3.5)(8,3.5)
\psline[linewidth=2pt,linecolor=lightgray](7.5,4.5)(9,4.5)(9,5.5)
\psline[linewidth=4pt,linecolor=black](7.5,3)(7.5,4.5)
\cnode[linewidth=1pt,fillstyle=solid,linecolor=black,fillcolor=lightgray](7,3){5pt}{v}
\cnode[linewidth=1pt,fillstyle=solid,fillcolor=white](7.5,4.5){5pt}{v}
\cnode[linewidth=1pt,fillstyle=solid,fillcolor=white](8,3.5){5pt}{v}
\cnode[linewidth=1pt,fillstyle=solid,fillcolor=white](9,5.5){5pt}{v}
\cnode[linewidth=1pt,fillstyle=solid,fillcolor=white](9.5,3){5pt}{v}

\psline[linewidth=2pt,linecolor=lightgray](10.5,6.5)(12,6.5)
\psline[linewidth=4pt,linecolor=black](12,6.5)(12,7.5)
\cnode[linewidth=1pt,fillstyle=solid,linecolor=black,fillcolor=lightgray](10.5,6.5){5pt}{v}
\cnode[linewidth=1pt,fillstyle=solid,fillcolor=white](12,7){5pt}{v}

\cnode[linewidth=1pt,fillstyle=solid,linecolor=black,fillcolor=lightgray](10.5,8){5pt}{v}

\psline[linewidth=2pt,linecolor=lightgray](9.5,8)(9.5,7)(8.5,7)(7.5,7)
\psline[linewidth=2pt,linecolor=lightgray](10.5,6.5)(12,6.5)
\psline[linewidth=2pt,linecolor=lightgray](7.5,6)(6,6)
\psline[linewidth=2pt,linecolor=lightgray](6,6.5)(6,5)(4.5,5)

\psline[linewidth=4pt,linecolor=black](7.5,6)(7.5,7)
\cnode[linewidth=1pt,fillstyle=solid,linecolor=black,fillcolor=lightgray](10.5,6.5){5pt}{v}
\cnode[linewidth=1pt,fillstyle=solid,fillcolor=white](12,7){5pt}{v}
\cnode[linewidth=1pt,fillstyle=solid,fillcolor=white](8.5,7){5pt}{v}
\cnode[linewidth=1pt,fillstyle=solid,fillcolor=white](9.5,7){5pt}{v}
\cnode[linewidth=1pt,fillstyle=solid,fillcolor=white](7,6){5pt}{v}
\cnode[linewidth=1pt,fillstyle=solid,fillcolor=white](6,5.5){5pt}{v}
\cnode[linewidth=1pt,fillstyle=solid,fillcolor=white](6,6.5){5pt}{v}
\cnode[linewidth=1pt,fillstyle=solid,fillcolor=white](4.5,5){5pt}{v}

\cnode[linewidth=1pt,fillstyle=solid,fillcolor=white](11.5,2){5pt}{v}
\cnode[linewidth=1pt,fillstyle=solid,linecolor=black,fillcolor=lightgray](7.5,2){5pt}{v}

\cnode[linewidth=1pt,fillstyle=solid,linecolor=black,fillcolor=lightgray](5,2.5){5pt}{v}
\cnode[linewidth=1pt,fillstyle=solid,linecolor=black,fillcolor=lightgray](3,5.5){5pt}{v}
\cnode[linewidth=1pt,fillstyle=solid,linecolor=black,fillcolor=lightgray](10,5.5){5pt}{v}

\psline[linewidth=4pt,linecolor=black](7,8)(7.5,8)
\cnode[linewidth=1pt,fillstyle=solid,linecolor=black,fillcolor=lightgray](7,8){5pt}{v}
\cnode[linewidth=1pt,fillstyle=solid,fillcolor=white](7.5,8){5pt}{v}

\rput(11,0){\footnotesize$\blacksquare$}
\rput(12,0.5){\footnotesize$\blacksquare$}
\rput(12,4){\footnotesize$\blacksquare$}
\rput(9.5,8){\footnotesize$\blacksquare$}
\rput(3,8){\footnotesize$\blacksquare$}
\rput(0,4){\footnotesize$\blacksquare$}

\rput(6,-1){\Large{an optimal solution of the subproblem}}
\rput(6,-1.5){\Large{with respect to the gates along the cut}}

}
\rput(-0.75,4.25){f)}
\endpspicture
\vspace{1.2cm}
\caption{An illustration of the subproblem and the recursive approach.}\label{fig:subproblem}
\end{center}
\end{figure}

The initial problem is specified by the bounding box $B$ such that $P \cup R \cup {\rm int}(B)$ (and so $X_e=S_e=\emptyset$ for each side of $B$). The output is a $k$-guillotine $G$ such that if $p \in {\rm Un}(G_k)$, then there exists a root $r \in R$ and a path $\pi$ in $G_k$ such that the length of $\pi$ is equal to $({\rm x}(p)-{\rm x}(r))+({\rm y}(p)-{\rm y}(r))$ (and so, in particular, for a point $p \in P$, then there exists a root $r \in R$ and a path $\pi$ in $G_k$ such that the length of $\pi$ is equal to $({\rm x}(p)-{\rm x}(r))+({\rm y}(p)-{\rm y}(r))$).

The base subproblem is specified by (a) a window $W$ containing inside no point in $P$, and (b) a constant number of gates: at most $2k+1$ of them per each side of $W$ (see Fig.~\ref{fig:subproblem}). Thus it can be also solved in $(k+m)^3\, 2^{O(k)}$ time by using the exact algorithm described in Section~\ref{sec:Exact}. For all other subproblems, we can find the optimal solution recursively, by joining the solutions of the two obtained by splitting the problem, and optimizing over all choices associated with a cut (including the {\sl cost} of the cut, that is, the sum of the lengths of its gates):
\begin{itemize}
\item[(a)] $O(n+m)$ choices of a cut by a horizontal or vertical line (determined by some $x_i$ or $y_j$);
\item[(b)] $O(2^{4k} \cdot (n+m)^{2k})$ choices of a set of gates on the cut within the window: $O(2^{2k} \cdot (n+m)^{2k})$ choices of internal points on the cut within the window, each of them generating $O(2^{2k})$ sets of gates.
\end{itemize}
Therefore, each subproblem takes $O(2^{4k} \cdot (n+m)^{2k+1})$ time. Since each subproblem, i.e., the sides of its window, inherits the structure of a~cut, the total number of subproblems is $O(2^{16k} \cdot (n+m)^{8k+4})$: there are $O(n+m)^4$ windows and $O(2^{4k} \cdot (n+m)^{2k})$ choices of a set of gates on each of the four sides of a window. Consequently, we obtain an overall time complexity of $O(2^{20k} \cdot (n+m)^{10k+5})=O((n+m)^{10k+5})$.

\begin{theorem}\label{thm:PTAS}
Given any fixed positive integer $k$, any set $P$ of $n$ points, and any set $R$ of $m$ roots, all lying in the first quadrant of the plane,  there is an $O((n+m)^{10k+5})$-time algorithm to compute an approximate rectilinear Steiner forest arborescence for $P$ with the root set $R$ whose length is within factor $(1+\frac{1}{k})$ of optimal.
\end{theorem}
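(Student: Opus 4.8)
\textbf{The plan} is to combine Mitchell's guillotine theorem (Theorem~3.\ref{thm:Mitchell}) with the dynamic program over the recursive structure of $k$-guillotine subdivisions sketched above, together with the counting already carried out. \textbf{Step 1 (the target to beat).} Let $F_{\rm OPT}$ be a minimum RSFA for $P$ with root set $R$. Since $F_{\rm OPT}$ is a forest of rectilinear Steiner arborescences, every point of ${\rm Un}(F_{\rm OPT})$ lies on a monotone path to some $r\in R$ of length $||p||-||r||$. Applying Theorem~3.\ref{thm:Mitchell} with parameter $k$ to $F_{\rm OPT}$ yields a $k$-guillotine subdivision $F_k$ with $l(F_k)\le(1+\frac1k)\,l(F_{\rm OPT})$ and ${\rm Un}(F_{\rm OPT})\subseteq{\rm Un}(F_k)$; moreover, as observed above, each bridge $\sigma_k(l)$ created during the guillotine conversion joins endpoints that already carry monotone root-paths, so in fact \emph{every} point of ${\rm Un}(F_k)$ is monotonically reachable from $R$. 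Hence the class $\mathcal{G}_k$ of $k$-guillotine subdivisions $G$ in which every point of ${\rm Un}(G)$ is monotonically reachable from $R$ is nonempty and contains a member of length at most $(1+\frac1k)\,l(F_{\rm OPT})$.

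\textbf{Step 2 (from a cheap $k$-guillotine subdivision to an RSFA).} Conversely, any $G\in\mathcal{G}_k$ can be pruned greedily --- repeatedly deleting an edge whose removal keeps every $p\in P$ monotonically connected to a root --- into an RSFA $F_G$ for $P$ with $l(F_G)\le l(G)$; so it suffices to compute a minimum-length member of $\mathcal{G}_k$. \textbf{Step 3 (the dynamic program).} A member of $\mathcal{G}_k$ restricted to a window $W$ is described precisely by a SUBPROBLEM instance: its trace on each side $e$ of $W$ is a set $X_e$ of at most $2k+2$ points with induced gates $S_e$, and the part inside $W$ satisfies the stated objective, where monotonicity forces any point of $I_W$ whose root lies outside $W$ to leave through the \emph{left or bottom} side of $W$ --- whence the objective only needs reachability to a root of $R\cap{\rm int}(W)$ or to a gate endpoint of $X_{e_l}\cup X_{e_b}$, the partial length $({\rm x}(p)-{\rm x}(u))+({\rm y}(p)-{\rm y}(u))$ being exactly the portion paid inside $W$. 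Optimal substructure then holds: either $V_G\cap{\rm int}(W)=\emptyset$ (the base case, solvable by the exact algorithm of Section~\ref{sec:Exact} in $(k+m)^3\,2^{O(k)}$ time), or some $k$-perfect cut $l$ splits $W$ into two windows on each of which the restriction of $G$ is again a smaller SUBPROBLEM instance; we recurse, optimising over the discretised cut position (legitimate by~\cite{M-1999}), over the $O(2^{4k}(n+m)^{2k})$ choices of gates on the cut, and adding the cost of the cut. The counting above gives $O(2^{16k}(n+m)^{8k+4})$ subproblems, each solved in $O(2^{4k}(n+m)^{2k+1})$ time, for a total of $O(2^{20k}(n+m)^{10k+5})=O((n+m)^{10k+5})$ since $k$ is fixed; pruning the returned subdivision as in Step~2 and invoking Step~1 yields a rectilinear Steiner forest arborescence of length at most $(1+\frac1k)\,l(F_{\rm OPT})$.

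\textbf{The main obstacle} is not the bookkeeping but the correctness of the dynamic program, i.e.\ showing that $\mathcal{G}_k$-membership is a \emph{local} property faithfully captured by the boundary gate data. Two points need care: (i) when a $k$-perfect cut $l$ splits $W$, the monotone-reachability requirement for each relevant point in each half must be certifiable using only the gates on $l$ together with those already on the boundary of $W$, so that the two sub-instances are genuinely independent and their optimal lengths add; and (ii) the $k$-span forced onto a cut carrying many endpoints must not destroy monotone reachability --- precisely the inheritance from Theorem~3.\ref{thm:Mitchell} used in Step~1. Granting these, a straightforward induction on the number of points of $P$ strictly inside the window shows the dynamic program returns $\min\{\,l(G):G\in\mathcal{G}_k\,\}$, which completes the argument.
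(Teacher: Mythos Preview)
Your proposal is correct and follows essentially the same approach as the paper: apply Mitchell's guillotine theorem to the optimum to land in the class $\mathcal{G}_k$, observe that any member of $\mathcal{G}_k$ prunes to an RSFA of no greater length, and compute a minimum-length member of $\mathcal{G}_k$ by the window/gate dynamic program with the identical counting ($O(2^{16k}(n+m)^{8k+4})$ subproblems times $O(2^{4k}(n+m)^{2k+1})$ per subproblem). If anything, you are slightly more explicit than the paper about the locality issues (i) and (ii) that make the DP correct, which the paper leaves largely implicit by appeal to the standard Mitchell framework.
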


\section{Fixed-parameter algorithm}\label{sec:FPT}

In this section, following~\cite{BTW-00, CC-2018, KS-2011}, we show that the RSFA problem is fixed-parameter tractable with respect to the parameter $h$, the minimum number of horizontal lines that all points are lying on. In particular, we show that the RSFA problem can be solved in $O^\ast(2^h)$ time (here, the $O^\ast$-notation neglects polynomial factors). Recall that the idea behind fixed-parameter tractability is to separate out the complexity into two pieces --- some piece that depends purely and polynomially (if possible) on the size of the input, and some piece that depends on some ``parameter'' to the problem.

Let $P$ and $R$ be two disjoint sets of $n$ points and $m$ roots, respectively,
all lying in the first quadrant of $\mathbb{R}^2$ (with $(0,0) \in R$), and let $H=H(P \cup R)$ be the Hanan grid of $P\cup R$; recall that each point in $P \cup R$ corresponds to some intersection grid point of $H$ and for any $X \subseteq P$, there exists an optimal RSFA for $X$ with the root set $R$ such that it uses only subsegments of the Hanan grid $H(X \cup R) \subseteq H(P \cup R)$ (see Lemma~1.\ref{lem:Hanan}). Assume without loss of generality that all intersection points of $H(P \cup R)$ are located on $v$ vertical lines and $h$ horizontal lines, with $2 \le h \le v$, and let $C=\{c_1,c_2,\ldots, c_{vh}\}$ be the set of interection points of $H(P \cup R)$ sorted lexicographically.

Let $k \in \{1,2, \ldots, vh\}$ be an integer and set $j:=\max(1,k-h+1)$. Now, let $$f_k \colon \{c_j,c_{j+1}\ldots, c_k\} \rightarrow \{0,1\}$$ be a function --- which we refer to as {\sl $k$-eligible} --- such that $$(P \cup R) \cap \{c_j,c_{j+1},\ldots,c_k\} \subseteq f_k^{-1}(1),$$ and let $F_{f_k}$ denote an optimal rectilinear Steiner forest arborescence for $(P \cap \{c_1,c_2,\ldots, c_k\}) \cup f_k^{-1}(1)$ with the root set $R$. The idea is that for each $k$-eligible function $f_k$, the RSFA $F_{f_k}$ candidates to be a partial solution for the global optimum (for $P$), where intersection points in $f_k^{-1}(1)$ play a role of gates, that is, the only points that a partial solution may expand through towards a global solution, see Fig.~\ref{fig:FPT-gates}(a-d) for an illustration. Clearly, in order to solve the RFSA problem for $P$ with the root set $R$, we need to compute $F_{f_k}$ for $k=vh$ and the function $f_k$ such that $$f_k^{-1}(1) = (P \cup R) \cap \{c_{k-h+1},c_{k-h+2}\ldots,c_k\}.$$

\begin{figure}
\begin{center}
\pspicture(6,5)
\rput(-.5,4.4){\small{a)}}

\scalebox{0.8}{

\psline[linewidth=0.25pt,linecolor=lightgray](0,0)(6,0)
\psline[linewidth=0.25pt,linecolor=lightgray](0,1)(6,1)
\psline[linewidth=0.25pt,linecolor=lightgray](0,2)(6,2)
\psline[linewidth=0.25pt,linecolor=lightgray](0,3)(6,3)
\psline[linewidth=0.25pt,linecolor=lightgray](0,4)(6,4)
\psline[linewidth=0.25pt,linecolor=lightgray](0,5)(6,5)
\psline[linewidth=0.25pt,linecolor=lightgray](0,6)(6,6)

\psline[linewidth=0.25pt,linecolor=lightgray](0,0)(0,6)
\psline[linewidth=0.25pt,linecolor=lightgray](1,0)(1,6)
\psline[linewidth=0.25pt,linecolor=lightgray](2,0)(2,6)
\psline[linewidth=0.25pt,linecolor=lightgray](3,0)(3,6)
\psline[linewidth=0.25pt,linecolor=lightgray](4,0)(4,6)
\psline[linewidth=0.25pt,linecolor=lightgray](5,0)(5,6)
\psline[linewidth=0.25pt,linecolor=lightgray](6,0)(6,6)

\psline[linewidth=2pt,linecolor=gray](0,0)(2,0)
\psline[linewidth=2pt,linecolor=gray](1,0)(1,1)(6,1)

\psline[linewidth=2pt,linecolor=gray](0,3)(1,3)

\psline[linewidth=2pt,linecolor=gray](3,6)(3,5)

\psline[linewidth=2pt,linecolor=gray](3,3)(4,3)(4,5)

\psline[linewidth=2pt,linecolor=gray](4,4)(6,4)(6,6)

\cnode[linewidth=0.5pt,fillstyle=solid,fillcolor=lightgray](0,0){3pt}{v}
\cnode[linewidth=0.5pt,fillstyle=solid,fillcolor=lightgray](0,3){3pt}{v}
\cnode[linewidth=0.5pt,fillstyle=solid,fillcolor=lightgray](3,3){3pt}{v}
\cnode[linewidth=0.5pt,fillstyle=solid,fillcolor=lightgray](3,5){3pt}{v}
\cnode[linewidth=0.5pt,fillstyle=solid,fillcolor=white](2,0){3pt}{v}
\cnode[linewidth=0.5pt,fillstyle=solid,fillcolor=white](1,1){3pt}{v}
\cnode[linewidth=0.5pt,fillstyle=solid,fillcolor=white](6,1){3pt}{v}
\cnode[linewidth=0.5pt,fillstyle=solid,fillcolor=white](1,3){3pt}{v}
\cnode[linewidth=0.5pt,fillstyle=solid,fillcolor=white](4,3){3pt}{v}
\cnode[linewidth=0.5pt,fillstyle=solid,fillcolor=white](3,6){3pt}{v}
\cnode[linewidth=0.5pt,fillstyle=solid,fillcolor=white](5,4){3pt}{v}
\cnode[linewidth=0.5pt,fillstyle=solid,fillcolor=white](4,5){3pt}{v}
\cnode[linewidth=0.5pt,fillstyle=solid,fillcolor=white](6,6){3pt}{v}
\cnode[linewidth=0.5pt,fillstyle=solid,fillcolor=lightgray](1,6){3pt}{v}


\rput(5.8,1.2){\small{$p$}}
\rput(0.2,.2){\small{$r$}}

}
\endpspicture
\pspicture(4.8,6)
\rput(-.5,4.4){\small{b)}}

\scalebox{0.8}{

\psline[linewidth=0.25pt,linecolor=lightgray](0,0)(6,0)
\psline[linewidth=0.25pt,linecolor=lightgray](0,1)(6,1)
\psline[linewidth=0.25pt,linecolor=lightgray](0,2)(6,2)
\psline[linewidth=0.25pt,linecolor=lightgray](0,3)(6,3)
\psline[linewidth=0.25pt,linecolor=lightgray](0,4)(6,4)
\psline[linewidth=0.25pt,linecolor=lightgray](0,5)(6,5)
\psline[linewidth=0.25pt,linecolor=lightgray](0,6)(6,6)

\psline[linewidth=0.25pt,linecolor=lightgray](0,0)(0,6)
\psline[linewidth=0.25pt,linecolor=lightgray](1,0)(1,6)
\psline[linewidth=0.25pt,linecolor=lightgray](2,0)(2,6)
\psline[linewidth=0.25pt,linecolor=lightgray](3,0)(3,6)
\psline[linewidth=0.25pt,linecolor=lightgray](4,0)(4,6)
\psline[linewidth=0.25pt,linecolor=lightgray](5,0)(5,6)
\psline[linewidth=0.25pt,linecolor=lightgray](6,0)(6,6)

\cnode*[linewidth=0.5pt,fillstyle=solid,fillcolor=lightgray](4,2){2pt}{v}
\cnode*[linewidth=0.5pt,fillstyle=solid,fillcolor=lightgray](4,1){2pt}{v}
\cnode*[linewidth=0.5pt,fillstyle=solid,fillcolor=lightgray](4,0){2pt}{v}
\cnode*[linewidth=0.5pt,fillstyle=solid,fillcolor=lightgray](3,6){2pt}{v}
\cnode*[linewidth=0.5pt,fillstyle=solid,fillcolor=lightgray](3,5){2pt}{v}
\cnode*[linewidth=0.5pt,fillstyle=solid,fillcolor=lightgray](3,4){2pt}{v}
\cnode*[linewidth=0.5pt,fillstyle=solid,fillcolor=lightgray](4,3){2pt}{v}

\rput(3.65,0.2){\small{$c_{29}$}}
\rput(3.65,1.2){\small{$c_{30}$}}
\rput(3.65,2.2){\small{$c_{31}$}}
\rput(3.65,3.2){\small{$c_{32}$}}

\rput(2.65,5.8){\small{$c_{28}$}}
\rput(2.65,4.8){\small{$c_{27}$}}
\rput(2.65,3.8){\small{$c_{26}$}}

\rput(4.25,0.2){\small{$0$}}
\rput(4.25,1.2){\small{$1$}}
\rput(4.25,2.2){\small{$0$}}
\rput(4.25,3.2){\small{$0$}}

\rput(3.25,5.8){\small{$1$}}
\rput(3.25,4.8){\small{$0$}}
\rput(3.25,3.8){\small{$0$}}

}
\endpspicture

\pspicture(6,6)
\rput(-.5,4.4){\small{c)}}

\scalebox{0.8}{

\psline[linewidth=0.25pt,linecolor=lightgray](0,0)(6,0)
\psline[linewidth=0.25pt,linecolor=lightgray](0,1)(6,1)
\psline[linewidth=0.25pt,linecolor=lightgray](0,2)(6,2)
\psline[linewidth=0.25pt,linecolor=lightgray](0,3)(6,3)
\psline[linewidth=0.25pt,linecolor=lightgray](0,4)(6,4)
\psline[linewidth=0.25pt,linecolor=lightgray](0,5)(6,5)
\psline[linewidth=0.25pt,linecolor=lightgray](0,6)(6,6)

\psline[linewidth=0.25pt,linecolor=lightgray](0,0)(0,6)
\psline[linewidth=0.25pt,linecolor=lightgray](1,0)(1,6)
\psline[linewidth=0.25pt,linecolor=lightgray](2,0)(2,6)
\psline[linewidth=0.25pt,linecolor=lightgray](3,0)(3,6)
\psline[linewidth=0.25pt,linecolor=lightgray](4,0)(4,6)
\psline[linewidth=0.25pt,linecolor=lightgray](5,0)(5,6)
\psline[linewidth=0.25pt,linecolor=lightgray](6,0)(6,6)

\cnode*[linewidth=0.5pt,fillstyle=solid,fillcolor=lightgray](4,2){2pt}{v}
\cnode*[linewidth=0.5pt,fillstyle=solid,fillcolor=lightgray](4,1){2pt}{v}
\cnode*[linewidth=0.5pt,fillstyle=solid,fillcolor=lightgray](4,0){2pt}{v}
\cnode*[linewidth=0.5pt,fillstyle=solid,fillcolor=lightgray](3,6){2pt}{v}
\cnode*[linewidth=0.5pt,fillstyle=solid,fillcolor=lightgray](3,5){2pt}{v}
\cnode*[linewidth=0.5pt,fillstyle=solid,fillcolor=lightgray](3,4){2pt}{v}
\cnode*[linewidth=0.5pt,fillstyle=solid,fillcolor=lightgray](4,3){2pt}{v}

\rput(3.65,0.2){\small{$c_{29}$}}
\rput(3.65,1.2){\small{$c_{30}$}}
\rput(3.65,2.2){\small{$c_{31}$}}
\rput(3.65,3.2){\small{$c_{32}$}}

\rput(2.65,5.8){\small{$c_{28}$}}
\rput(2.65,4.8){\small{$c_{27}$}}
\rput(2.65,3.8){\small{$c_{26}$}}

\rput(4.25,0.2){\small{$0$}}
\rput(4.25,1.2){\small{$0$}}
\rput(4.25,2.2){\small{$0$}}
\rput(4.25,3.2){\small{$1$}}

\rput(3.25,5.8){\small{$1$}}
\rput(3.25,4.8){\small{$1$}}
\rput(3.25,3.8){\small{$0$}}

}
\endpspicture
\pspicture(4.8,5)
\rput(-.5,4.4){\small{d)}}

\scalebox{0.8}{

\psline[linewidth=0.25pt,linecolor=lightgray](0,0)(6,0)
\psline[linewidth=0.25pt,linecolor=lightgray](0,1)(6,1)
\psline[linewidth=0.25pt,linecolor=lightgray](0,2)(6,2)
\psline[linewidth=0.25pt,linecolor=lightgray](0,3)(6,3)
\psline[linewidth=0.25pt,linecolor=lightgray](0,4)(6,4)
\psline[linewidth=0.25pt,linecolor=lightgray](0,5)(6,5)
\psline[linewidth=0.25pt,linecolor=lightgray](0,6)(6,6)

\psline[linewidth=0.25pt,linecolor=lightgray](0,0)(0,6)
\psline[linewidth=0.25pt,linecolor=lightgray](1,0)(1,6)
\psline[linewidth=0.25pt,linecolor=lightgray](2,0)(2,6)
\psline[linewidth=0.25pt,linecolor=lightgray](3,0)(3,6)
\psline[linewidth=0.25pt,linecolor=lightgray](4,0)(4,6)
\psline[linewidth=0.25pt,linecolor=lightgray](5,0)(5,6)
\psline[linewidth=0.25pt,linecolor=lightgray](6,0)(6,6)

\cnode*[linewidth=0.5pt,fillstyle=solid,fillcolor=lightgray](4,2){2pt}{v}
\cnode*[linewidth=0.5pt,fillstyle=solid,fillcolor=lightgray](4,1){2pt}{v}
\cnode*[linewidth=0.5pt,fillstyle=solid,fillcolor=lightgray](4,0){2pt}{v}
\cnode*[linewidth=0.5pt,fillstyle=solid,fillcolor=lightgray](3,6){2pt}{v}
\cnode*[linewidth=0.5pt,fillstyle=solid,fillcolor=lightgray](3,5){2pt}{v}
\cnode*[linewidth=0.5pt,fillstyle=solid,fillcolor=lightgray](3,4){2pt}{v}
\cnode*[linewidth=0.5pt,fillstyle=solid,fillcolor=lightgray](4,3){2pt}{v}

\rput(3.65,0.2){\small{$c_{29}$}}
\rput(3.65,1.2){\small{$c_{30}$}}
\rput(3.65,2.2){\small{$c_{31}$}}
\rput(3.65,3.2){\small{$c_{32}$}}

\rput(2.65,5.8){\small{$c_{28}$}}
\rput(2.65,4.8){\small{$c_{27}$}}
\rput(2.65,3.8){\small{$c_{26}$}}

\rput(4.25,0.2){\small{$0$}}
\rput(4.25,1.2){\small{$1$}}
\rput(4.25,2.2){\small{$0$}}
\rput(4.25,3.2){\small{$1$}}

\rput(3.25,5.8){\small{$1$}}
\rput(3.25,4.8){\small{$1$}}
\rput(3.25,3.8){\small{$0$}}

}
\endpspicture

\pspicture(6,5)
\rput(-.5,4.4){\small{e)}}

\scalebox{0.8}{

\pspolygon[linecolor=gray9,fillstyle=solid,fillcolor=gray9](0,0)(0,6)(4,6)(4,4)(5,4)(5,0)

\pspolygon[linecolor=gray8,fillstyle=solid,fillcolor=gray8](0,0)(0,6)(4,6)(4,3)(5,3)(5,0)

\psline[linewidth=0.25pt,linecolor=lightgray](0,0)(6,0)
\psline[linewidth=0.25pt,linecolor=lightgray](0,1)(6,1)
\psline[linewidth=0.25pt,linecolor=lightgray](0,2)(6,2)
\psline[linewidth=0.25pt,linecolor=lightgray](0,3)(6,3)
\psline[linewidth=0.25pt,linecolor=lightgray](0,4)(6,4)
\psline[linewidth=0.25pt,linecolor=lightgray](0,5)(6,5)
\psline[linewidth=0.25pt,linecolor=lightgray](0,6)(6,6)

\psline[linewidth=0.25pt,linecolor=lightgray](0,0)(0,6)
\psline[linewidth=0.25pt,linecolor=lightgray](1,0)(1,6)
\psline[linewidth=0.25pt,linecolor=lightgray](2,0)(2,6)
\psline[linewidth=0.25pt,linecolor=lightgray](3,0)(3,6)
\psline[linewidth=0.25pt,linecolor=lightgray](4,0)(4,6)
\psline[linewidth=0.25pt,linecolor=lightgray](5,0)(5,6)
\psline[linewidth=0.25pt,linecolor=lightgray](6,0)(6,6)

\psline[linewidth=2.5pt,linecolor=gray,linestyle=dotted,dotsep=1pt](4,4)(5,4)

\cnode[linewidth=0.5pt,fillstyle=solid,fillcolor=white](5,4){3pt}{v}

\cnode*[linewidth=0.5pt,fillstyle=solid,fillcolor=lightgray](5,2){2pt}{v}
\cnode*[linewidth=0.5pt,fillstyle=solid,fillcolor=lightgray](5,1){2pt}{v}
\cnode*[linewidth=0.5pt,fillstyle=solid,fillcolor=lightgray](5,0){2pt}{v}
\cnode*[linewidth=0.5pt,fillstyle=solid,fillcolor=lightgray](4,6){2pt}{v}
\cnode*[linewidth=0.5pt,fillstyle=solid,fillcolor=lightgray](4,5){2pt}{v}
\cnode*[linewidth=0.5pt,fillstyle=solid,fillcolor=lightgray](4,4){2pt}{v}
\cnode*[linewidth=0.5pt,fillstyle=solid,fillcolor=lightgray](5,3){2pt}{v}
\cnode*[linewidth=0.5pt,fillstyle=solid,fillcolor=lightgray](5,4){2pt}{v}

\rput(4.65,0.2){\small{$c_{36}$}}
\rput(4.65,1.2){\small{$c_{37}$}}
\rput(4.65,2.2){\small{$c_{38}$}}
\rput(4.65,3.2){\small{$c_{39}$}}

\rput(3.65,5.8){\small{$c_{35}$}}
\rput(3.65,4.8){\small{$c_{32}$}}
\rput(3.65,3.8){\small{$c_{31}$}}

\rput(4.65,4.2){\small{$c_{40}$}}

\rput(5.25,4.2){\small{$1$}}
\rput(4.25,3.8){\small{$1$}}
\rput(5.25,3.2){\small{$?$}}

}
\endpspicture
\pspicture(4.8,6)
\rput(-.5,4.4){\small{f)}}

\scalebox{0.8}{
\pspolygon[linecolor=gray9,fillstyle=solid,fillcolor=gray9](0,0)(0,6)(4,6)(4,4)(5,4)(5,0)

\pspolygon[linecolor=gray8,fillstyle=solid,fillcolor=gray8](0,0)(0,6)(4,6)(4,3)(5,3)(5,0)

\psline[linewidth=0.25pt,linecolor=lightgray](0,0)(6,0)
\psline[linewidth=0.25pt,linecolor=lightgray](0,1)(6,1)
\psline[linewidth=0.25pt,linecolor=lightgray](0,2)(6,2)
\psline[linewidth=0.25pt,linecolor=lightgray](0,3)(6,3)
\psline[linewidth=0.25pt,linecolor=lightgray](0,4)(6,4)
\psline[linewidth=0.25pt,linecolor=lightgray](0,5)(6,5)
\psline[linewidth=0.25pt,linecolor=lightgray](0,6)(6,6)

\psline[linewidth=0.25pt,linecolor=lightgray](0,0)(0,6)
\psline[linewidth=0.25pt,linecolor=lightgray](1,0)(1,6)
\psline[linewidth=0.25pt,linecolor=lightgray](2,0)(2,6)
\psline[linewidth=0.25pt,linecolor=lightgray](3,0)(3,6)
\psline[linewidth=0.25pt,linecolor=lightgray](4,0)(4,6)
\psline[linewidth=0.25pt,linecolor=lightgray](5,0)(5,6)
\psline[linewidth=0.25pt,linecolor=lightgray](6,0)(6,6)

\psline[linewidth=2.5pt,linecolor=gray,linestyle=dotted,dotsep=1pt](5,3)(5,4)

\cnode[linewidth=0.5pt,fillstyle=solid,fillcolor=white](5,4){3pt}{v}

\cnode*[linewidth=0.5pt,fillstyle=solid,fillcolor=lightgray](5,2){2pt}{v}
\cnode*[linewidth=0.5pt,fillstyle=solid,fillcolor=lightgray](5,1){2pt}{v}
\cnode*[linewidth=0.5pt,fillstyle=solid,fillcolor=lightgray](5,0){2pt}{v}
\cnode*[linewidth=0.5pt,fillstyle=solid,fillcolor=lightgray](4,6){2pt}{v}
\cnode*[linewidth=0.5pt,fillstyle=solid,fillcolor=lightgray](4,5){2pt}{v}
\cnode*[linewidth=0.5pt,fillstyle=solid,fillcolor=lightgray](4,4){2pt}{v}
\cnode*[linewidth=0.5pt,fillstyle=solid,fillcolor=lightgray](5,3){2pt}{v}
\cnode*[linewidth=0.5pt,fillstyle=solid,fillcolor=lightgray](5,4){2pt}{v}

\rput(4.65,0.2){\small{$c_{36}$}}
\rput(4.65,1.2){\small{$c_{37}$}}
\rput(4.65,2.2){\small{$c_{38}$}}
\rput(4.65,3.2){\small{$c_{39}$}}

\rput(3.65,5.8){\small{$c_{35}$}}
\rput(3.65,4.8){\small{$c_{32}$}}
\rput(3.65,3.8){\small{$c_{31}$}}

\rput(4.65,4.2){\small{$c_{40}$}}

\rput(5.25,4.2){\small{$1$}}
\rput(4.25,3.8){\small{$?$}}
\rput(5.25,3.2){\small{$1$}}

}
\endpspicture

\vspace{0.6cm}
\caption{(a) The RSFA for the input set $P$ (white dots) with the root set $R$ (gray dots) here $h=7$.\\ (b) Since $c_{27}, c_{32} \in P \cup R$, the depicted 0/1-assignment $f_{32}$ is non $32$-eligible for points $c_{26},c_{27}, \dots, c_{32}$. (c,\,d) Eligible 0/1-assignments for points $c_{26},c_{27}, \dots, c_{32}$. Notice that the assignment $f_{32}$ depicted in (c), although  $32$-eligible, will be eventually ``useless'' for the input sets $P$ and $R$ since any RSFA for $P$ with the root set $R$ has to use point $c_{29}$ or $c_{30}$, in order to connect the point $p$ to the root $r=(0,0)$. (e,\,f) Since $c_{40} \in P$, we must have $f_{40}(c_{40}) =1 $ for any $40$-eligible function $f_{40}$. Next, in order to construct the RSFA $F_{f_{40}}$ recursively, we must have (e) $f_{39}(c_{31})=1$ or (f) $f_{39}(c_{39})=1$ for the relevant $39$-eligible function $f_{39}$. }\label{fig:FPT-gates}
\end{center}
\end{figure}

All this can be done by dynamic programming method, and the rationale for such approach is the following: given a $k$-eligible function $f_k$, an optimal RSFA $F_{f_{k}}$ consists of an optimal RSFA $F_{f_{k-1}}$ for the relevant $(k-1)$-eligible function $f_{k-1}$, being ``consistent'' with the status of the intersection point $c_{k}$, which can be '$0/R$' or '$1 / P$', in particular, the status $1/P$ of $c_{k}$ requires $f_{k-1}$ such that $c_{k-1} \in f_{k-1}^{-1}(1)$ if $k \le v$, and  $\{c_{\max(1,k-h)},c_{k-1}\} \cap f_{k-1}^{-1}(1) \neq \emptyset$ otherwise. 

To be more specific, the dynamic programming recursion (Bellman equation) is defined as follows. The base of the recursion is $k=1$, namely, for $k=1$, the only $1$-eligible function $f_1 \colon \{c_1\} \rightarrow \{0,1\}$ is the one with $f_1(c_1)=1$ (since $c_1=(0,0) \in R$), and so $F_{f_1}$ constitutes of the singleton $c_1$.  Next, for $2 \le k \le h$, if $c_k \notin R$ and $f_k$ is a $k$-eligible function with $f_k(c_k)=1$, then $F_{f_k}$ is the RSFA constructed from the RSFA $F_{f_{k-1}}$ (already tabulated) by adding the edge $c_jc_k$, where $j = \max \{i : f_{k-1}(c_i)=1\}$ and $f_{k-1}$ is the restriction of $f_{k}$ to $\{c_1,\ldots, c_{k-1}\}$, that is,  $f_k(x)=f_{k-1}(x)$ for each $x \in \{c_1,\ldots, c_{k-1}\}$. (Notice that $f_{k-1}$ is a $(k-1)$-eligible function by $k$-eligibility of $f_k$.) Otherwise, if either $c_k \in R$ or $f_k(c_k)=0$, then we just set $F_{f_k}:=F_{f_{k-1}}$, with $f_{k-1}$ defined as for the case $c_k \notin R$ and $f_k(c_k)=1$.   

Assume now that $k >h$ and let $f_k$ be a $k$-eligible function. If $c_k \notin R$ and $f_k(c_k)=1$, then the RSFA $F_{f_k}$ is constructed from the RSFA $F_{f_{k-1}}$ (already tabulated) by adding the edge $c_jc_k$, where $f_{k-1}$ is a $(k-1)$-eligible function such that $f_{k-1}(x)=f_k(x)$ for each $x \in \{c_{k-h+1},c_{k-h+2},\ldots, c_{k-1}\}$, and:
\begin{itemize}
\item[a)] $j=k-h$ if $f_{k-1}(c_{k-h})=1$, or
\item[b)] $j=k-1$ if $k \neq 1 \bmod h$ and $f_{k-1}(c_{k-1})=1$. 
\end{itemize}
Clearly, among cases (a-b) we choose the one that minimizes the sum $$l(F_{f_{k-1}})+{\rm dist}(c_j,c_k),$$ taken over all such $(k-1)$-eligible functions $f_{k-1}$. Otherwise, if either $c_k \in R$ or $f_k(c_k)=0$, we just set $F_{f_k}:=F_{f_{k-1}}$, where $f_{k-1}$ is a ($k-1$)-eligible function such that $f_{k-1}(x)=f_k(x)$ for each $x \in \{c_{k-h+1},c_{k-h+2},\ldots, c_{k-1}\}$.

The correctness of the above approach, in particular, the fact that at each single step, the constructed object is an optimal RSFA, directly follows from the construction (by induction on $k$; we omit details).
And, since we tabulate the values/RSFAs $F_{f_k}$ for each $k \in \{1,\ldots, vh\}$ and each $k$-eligible function $f_k$, which are at at most $O(2^h)$ of them for each $k$, and the RSFA $F_{f_k}$ can be computed in $O(1)$ time from all the already tabulated data, the overall time complexity of our approach is $O(vh2^h)$, and so $O((n+m)h2^h)$ since $v \le n+m$. Consequently, the RSFA problem is fixed-parameter
tractable with respect to the parameter $h$. 

\begin{theorem}
Given a set $P$ of $n$ points and a set $R$ of $m$
roots, with $(0,0) \in R$ lying on $h$ horizontal lines in the first quadrant of the plane, the RSFA for $P$ with the root set $R$ can computed in $O((n+m)h2^h)$ time and
$O(2^h)$ space.
\end{theorem}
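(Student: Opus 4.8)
The plan is to make rigorous the dynamic program outlined above, which processes the lexicographically sorted intersection points $c_1,\dots,c_{vh}$ of the Hanan grid $H=H(P\cup R)$ one at a time while maintaining a sliding ``frontier'' consisting of the last $h$ of them, one on each horizontal line. By Lemma~1.\ref{lem:Hanan} we may restrict the search to RSFA's whose edges are subsegments of $H$, and the organising structural fact is that any such forest, restricted to the region spanned by $c_1,\dots,c_k$, can interact with the not-yet-processed points $c_{k+1},c_{k+2},\dots$ only through grid points in the current frontier $c_{k-h+1},\dots,c_k$; a $k$-eligible function $f_k$ records which of these frontier points are active gates. The quantity to tabulate, $l(F_{f_k})$, is then the minimum length of a grid RSFA for the terminal set $(P\cap\{c_1,\dots,c_k\})\cup f_k^{-1}(1)$ with root set $R$, subject to being confined to $\{c_1,\dots,c_k\}$.

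First I would settle the base case: since $c_1=(0,0)\in R$, the only $1$-eligible function has $f_1(c_1)=1$ and $F_{f_1}$ is the single vertex $c_1$ of length $0$. Then I would verify the recursion in its two regimes. The key geometric observation is that the only grid edges joining $c_k$ to a lexicographically earlier grid point are the edge to the grid point immediately below $c_k$, which is $c_{k-1}$ (absent when $k\equiv 1\bmod h$, i.e.\ when $c_k$ starts a new column), and the edge to the grid point immediately to the left of $c_k$, which is $c_{k-h}$ (absent when $k\le h$). Hence, when $c_k$ must be newly incorporated ($c_k\notin R$ and $f_k(c_k)=1$), it is attached by a single such edge to an active frontier point, and $l(F_{f_k})$ is obtained by minimising $l(F_{f_{k-1}})+{\rm dist}(c_j,c_k)$ over the admissible choices of $j$ and over all $(k-1)$-eligible $f_{k-1}$ that agree with $f_k$ on $c_{k-h+1},\dots,c_{k-1}$ and mark $c_j$ active; when $c_k\in R$ or $f_k(c_k)=0$ we simply inherit $F_{f_{k-1}}$. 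A small point to check here is that appending the edge $c_jc_k$ keeps the object a valid RSFA: since $c_j$ is covered by no more points than $c_k$ is, every distance-preserving (coordinatewise-monotone) path to a root is preserved.

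The hard part will be the correctness proof, which I would carry out by induction on $k$, showing that $l(F_{f_k})$ equals the optimum over all grid RSFA's for $(P\cap\{c_1,\dots,c_k\})\cup f_k^{-1}(1)$ confined to $\{c_1,\dots,c_k\}$. One inequality is immediate from the construction. For the other, take such an optimal forest $F^\ast$; by a short local-surgery argument one may assume $c_k$ has at most one incident edge leading into the already-processed region, and by the geometric observation above that edge, if present, is $c_{k-1}c_k$ or $c_{k-h}c_k$. Deleting it yields a grid RSFA for a strictly smaller terminal set whose own future-frontier is encoded by some $(k-1)$-eligible $f_{k-1}$ consistent with $f_k$, so the inductive hypothesis applies. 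The two delicate points are (i) checking that the induced $f_{k-1}$ is genuinely $(k-1)$-eligible, i.e.\ that no point of $(P\cup R)$ in the $(k-1)$-window is left disconnected, and (ii) checking that no distance-preserving path is destroyed by the deletion --- precisely the subtlety illustrated in Fig.~\ref{fig:FPT-gates}, which is what forces the condition ``$\{c_{k-h},c_{k-1}\}\cap f_{k-1}^{-1}(1)\neq\emptyset$'' in the non-first-column case.

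Finally I would count. There are $vh$ values of $k$ and, for each, at most $2^h$ $k$-eligible functions (a $k$-eligible function is determined by a $0/1$ labelling of at most $h$ frontier points); each table entry $l(F_{f_k})$ is computed in $O(1)$ time from a constant number of previously tabulated lengths together with one ${\rm dist}$ value. This gives running time $O(vh\,2^h)=O((n+m)h\,2^h)$ since $v\le n+m$, and, keeping only the current and previous frontier layers of $2^h$ length-values, working space $O(2^h)$; the optimal forest itself is then recovered from back-pointers in the standard way.
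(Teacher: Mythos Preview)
Your proposal is correct and follows essentially the same approach as the paper: the same lexicographic sweep over Hanan-grid points, the same sliding frontier encoded by $k$-eligible $\{0,1\}$-functions, and the same $O(1)$-time transition per state yielding $O((n+m)h\,2^h)$ time and $O(2^h)$ space. Your outline of the correctness argument is in fact more explicit than the paper's, which simply writes ``we omit details.''

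One small technical point worth noting: in the first-column regime $2\le k\le h$, the paper attaches $c_k$ not to $c_{k-1}$ but to the \emph{topmost active} point $c_j$ below it (adding the whole segment $c_jc_k$), whereas you restrict to the single grid edge $c_{k-1}c_k$. With your version, states with $f_k(c_k)=1$ and $f_k(c_{k-1})=0$ receive value $+\infty$, so the quantity you tabulate is not literally ``the optimum RSFA for $(P\cap\{c_1,\dots,c_k\})\cup f_k^{-1}(1)$'' as you state; it is that optimum under the additional constraint that frontier points marked~$0$ are not used even as Steiner points. This does not affect the final answer (the optimal backward trace always sets $f_k(c_i)=1$ whenever $c_i$ is used by the optimum forest), but if you keep your single-edge recursion you should phrase the inductive invariant with that extra constraint so the induction in your ``hard part'' actually closes. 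Alternatively, adopting the paper's longer-segment rule for $k\le h$ makes your stated invariant hold verbatim.
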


\section{Future work} 

In this section, we discuss briefly two results that are sought to make our RFSA-survey complete. We expect that all approaches (those discussed above as well as these discussed below), following for example~\cite{R-2003}, can be extended for the case when points and roots can be located in any of four quadrants, that is, to the variant when the following constraint is imposed: for any for each point $p \in P$,  there exists a path $\pi$ to some root $r \in R$ in the same quadrant as $p$ such that the length of $\pi$ is equal to $|{\rm x}(p)-{\rm x}(r)|+|{\rm y}(p)-{\rm y}(r)|$.

\paragraph{Faster exact algorithm.} Recently, inspired by the work of Klein and Marx~\cite{KM14} on a subexponential-time algorithm for the Subset Traveling Salesman problem on planar graphs, Fomin et al.~\cite{FKLPS-2020} proposed the first subexponential algorithm for the RSA problem, with the running time of $O(2^{\sqrt{n} \log n})$; recall that the relevant algorithm in Section~\ref{sec:Exact} runs in $O(3^n)$ time.
We believe that their result can be extended to the RSFA problem as well. 

\paragraph{Efficient $2$-approximation algorithm.} 
From a practical point of view, the running time of our PTAS for the RSFA problem is unsatisfactory, which motivates us to design a fast constant factor approximation algorithm. For the RSA problem, there are two greedy 2-approximated algorithms~\cite{R-2003,RSHS-1992}. It is natural then to try to apply similar approaches for the RSFA problem.

The idea of the 2-approximation algorithm of Rao et al.~\cite{RSHS-1992} for the RSA problem is as follows. Given a set of points $P$ and the root $r$, the arborescence is generated by iteratively replacing the pair of points $p$ and $q$ in $P$ by $\pq$ until a single point (which has to be $r$) remains. The points $p$ and $q$ are chosen to maximize $||\pq||$ over all points in the current set $P$, and the resulting arborescence consists of all relevant edges from $\pq$ to $p$ and $q$ (a degenerate case is possible if $\pq$ is either $p$ or $q$). Now, for the RSFA problem, when considering a set of roots $R$, a natural idea is to apply above heuristic with the only difference that we must handle also elements in $R$ as possible roots for some trees in the current partial solution. Observe that simple replacing points $p$ and $q$ whenever it is possible does not have a reasonable relative performance guarantee as illustrated in Fig.~\ref{fig:simple_does_not_work}. Clearly, one can abstain from the relevant replacement, since the distance from $p$ to the nearest root $r_1$ is smaller then the sum of distances from $p$ to $\pq$ and from $\pq$ to $r$. This introduces, however, another principle in the (possible) correctness proof, and so another attempt is trying to rely directly on the $2$-approximation method in~\cite{R-2003}, using that algorithm as a subroutine. We believe that simple modification of this method 
will result in an effective $2$-approximation algorithm for the RSFA problem as well. 

\begin{figure}
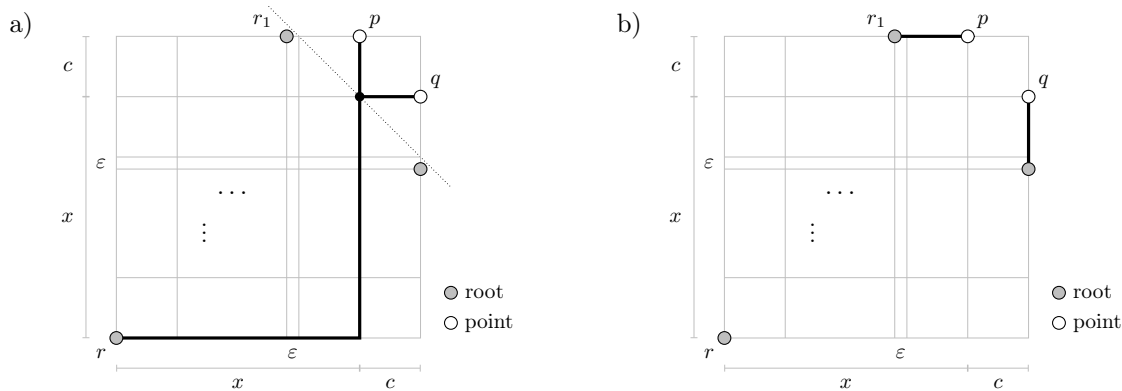

\begin{center}
\pspicture(0,-1)(8,4)
\rput(-1.25,4.15){\small{a)}}
\scalebox{0.8}{

\psline[linewidth=0.5pt,linestyle=dotted,dotsep=1pt,linecolor=black](2.5,5.5)(5.5,2.5)
\psline[linewidth=0.25pt,linecolor=lightgray](0,1)(5,1)
\psline[linewidth=0.25pt,linecolor=lightgray](0,4)(5,4)
\psline[linewidth=0.25pt,linecolor=lightgray](0,3)(5,3)
\psline[linewidth=0.25pt,linecolor=lightgray](0,2.8)(5,2.8)

\psline[linewidth=0.25pt,linecolor=lightgray](1,0)(1,5)
\psline[linewidth=0.25pt,linecolor=lightgray](4,0)(4,5)
\psline[linewidth=0.25pt,linecolor=lightgray](3,0)(3,5)
\psline[linewidth=0.25pt,linecolor=lightgray](2.8,0)(2.8,5)

\psline[linewidth=0.25pt,linecolor=lightgray](0,0)(5,0)(5,5)(0,5)(0,0)

\psline[linewidth=1.5pt,linecolor=black](4,5)(4,0)(0,0)
\psline[linewidth=1.5pt,linecolor=black](4,4)(5,4)

\cnode*[linewidth=0.5pt,fillstyle=solid,linecolor=black](4,4){2.25pt}{v}

\cnode[linewidth=0.5pt,fillstyle=solid,fillcolor=lightgray](0,0){3pt}{v}
\cnode[linewidth=0.5pt,fillstyle=solid,fillcolor=white](4,5){3pt}{v}
\cnode[linewidth=0.5pt,fillstyle=solid,fillcolor=white](5,4){3pt}{v}
\cnode[linewidth=0.5pt,fillstyle=solid,fillcolor=lightgray](2.8,5){3pt}{v}
\cnode[linewidth=0.5pt,fillstyle=solid,fillcolor=lightgray](5,2.8){3pt}{v}

\cnode[linewidth=0.5pt,fillstyle=solid,fillcolor=lightgray](5.5,0.75){3pt}{v}
\cnode[linewidth=0.5pt,fillstyle=solid,fillcolor=white](5.5,0.25){3pt}{v}
\rput(6.13,0.25){\small{point}}
\rput(6.05,0.78){\small{root}}


\rput(1.9,2.4){\large$\cdots$}
\rput(1.45,1.85){\large$\vdots$}

\rput(-.25,-.25){\small$r$}
\rput(2.4,5.25){\small$r_1$}
\rput(4.25,5.25){\small$p$}
\rput(5.25,4.25){\small$q$}

\psline[linewidth=0.25pt,linecolor=lightgray]{|-|}(0,-0.5)(4,-.5)
\rput(2,-.75){\small$x$}
\psline[linewidth=0.25pt,linecolor=lightgray]{|-|}(-0.5,0)(-0.5,4)
\rput(-0.8,2){\small$x$}

\rput(2.9,-0.25){\small$\eps$}
\rput(-0.25,2.9){\small$\eps$}

\psline[linewidth=0.25pt,linecolor=lightgray]{|-|}(5,-0.5)(4,-.5)
\rput(4.5,-.75){\small$c$}
\psline[linewidth=0.25pt,linecolor=lightgray]{|-|}(-0.5,5)(-0.5,4)
\rput(-0.8,4.5){\small$c$}

}
\endpspicture
\pspicture(0,-1)(4,4)
\rput(-1.25,4.15){\small{b)}}
\scalebox{0.8}{

\psline[linewidth=0.25pt,linecolor=lightgray](0,1)(5,1)
\psline[linewidth=0.25pt,linecolor=lightgray](0,4)(5,4)
\psline[linewidth=0.25pt,linecolor=lightgray](0,3)(5,3)
\psline[linewidth=0.25pt,linecolor=lightgray](0,2.8)(5,2.8)

\psline[linewidth=0.25pt,linecolor=lightgray](1,0)(1,5)
\psline[linewidth=0.25pt,linecolor=lightgray](4,0)(4,5)
\psline[linewidth=0.25pt,linecolor=lightgray](3,0)(3,5)
\psline[linewidth=0.25pt,linecolor=lightgray](2.8,0)(2.8,5)

\psline[linewidth=0.25pt,linecolor=lightgray](0,0)(5,0)(5,5)(0,5)(0,0)

\psline[linewidth=1.5pt,linecolor=black](4,5)(2.8,5)
\psline[linewidth=1.5pt,linecolor=black](5,4)(5,2.8)

\cnode[linewidth=0.5pt,fillstyle=solid,fillcolor=lightgray](0,0){3pt}{v}
\cnode[linewidth=0.5pt,fillstyle=solid,fillcolor=white](4,5){3pt}{v}
\cnode[linewidth=0.5pt,fillstyle=solid,fillcolor=white](5,4){3pt}{v}
\cnode[linewidth=0.5pt,fillstyle=solid,fillcolor=lightgray](2.8,5){3pt}{v}
\cnode[linewidth=0.5pt,fillstyle=solid,fillcolor=lightgray](5,2.8){3pt}{v}

\cnode[linewidth=0.5pt,fillstyle=solid,fillcolor=lightgray](5.5,0.75){3pt}{v}
\cnode[linewidth=0.5pt,fillstyle=solid,fillcolor=white](5.5,0.25){3pt}{v}
\rput(6.13,0.25){\small{point}}
\rput(6.05,0.78){\small{root}}


\rput(1.9,2.4){\large$\cdots$}
\rput(1.45,1.85){\large$\vdots$}

\psline[linewidth=0.25pt,linecolor=lightgray]{|-|}(0,-0.5)(4,-.5)
\rput(2,-.75){\small$x$}
\psline[linewidth=0.25pt,linecolor=lightgray]{|-|}(-0.5,0)(-0.5,4)
\rput(-0.8,2){\small$x$}

\rput(2.9,-0.25){\small$\eps$}
\rput(-0.25,2.9){\small$\eps$}

\psline[linewidth=0.25pt,linecolor=lightgray]{|-|}(5,-0.5)(4,-.5)
\rput(4.5,-.75){\small$c$}
\psline[linewidth=0.25pt,linecolor=lightgray]{|-|}(-0.5,5)(-0.5,4)
\rput(-0.8,4.5){\small$c$}

\rput(-.25,-.25){\small$r$}
\rput(2.5,5.25){\small$r_1$}
\rput(4.25,5.25){\small$p$}
\rput(5.25,4.25){\small$q$}

}
\endpspicture

\caption{(a) Replacing points $p$ and $q$ with $u=\pq$ results in a rectilinear Steiner arborescence of the cost $2(c+x)$, while (b) the cost of the optimal solution is $2(c+\eps)$.}\label{fig:simple_does_not_work}
\end{center}
\end{figure}

\section{Open problem --- Manhattan networks}
Given a set of points $P$ in $\mathbb{R}^2$, a network $(G,P)$ is said to be a {\sl Manhattan network} for $P$ if for all $p,q \in P$, there exists a Manhattan path between $p$ and $q$ with all its segments in $G$; the length of $(G,P)$, denoted by $L(G)$, is the total length of all segments in $G$. For the given set $S$,  the {\sl Minimum Manhattan Network} (MNN) problem is to find a Manhattan network $(G,S)$ of minimum length $L(G)$.
The MNN problem was first introduced by Gudmundson et al.~\cite{GLN-2001}, who proposed an $O(n^3)$-time $4$-approximation algorithm and an $O(n \log n)$-time $8$-approximation algorithm.  After~\cite{GLN-2001}, much research has been devoted to finding better approximation algorithms --- see for example~\cite{BWW-2004, BWWT-2006, ChNV-2008, GSZ-2011, KIT-2002, SU-2005} --- and only in $2009$, Chin et al.~\cite{CGS-2009} proved the MMN problem to be NP-hard. In addition, 
a fixed-parameter algorithm running in $O^{\ast}(2^{14h})$ time (neglecting a factor that is polynomial in $n$), where the parameter $h$ is the minimum number of axis-parallel straight-lines (either all horizontal or all vertical) that contain all the points in $P$, was proposed by Knauer and in~\cite{KS-2011}, the MNN problem in higher dimensions has been considered in~\cite{D+2015,MSU-2009}, while \cite{G+2007, JMR-2022} consider a variant of the problem where the goal is to minimize the number of vertices (Steiner points) and edges. 
    
Keeping in mind the motivation of RSFA, it is natural to define a relevant variant of MNN. Namely, given a minimum Manhattan network $(G_1,P_1)$ and the input set $P_2$ of points within the bounding box of $P_1$, the {\sl Minimum Manhattan Network with the Backbone $G_1$} (MMNB) problem is to compute the minimum Manhattan network $(G_2, P_1 \cup P_2)$ such that ${\rm Un}(G_1) \subseteq {\rm Un}(G_2)$, where  ${\rm Un}(G)$ denotes the union of all line segments (edges) in a Manhattan network $G$. In particular, one can be interested in designing a fast $2$-approximation algorithm for the MMNB problem.

\end{document}